\newtheorem{assumption}{Assumption}
\newcounter{assum}[assumption]
\newcommand{\Assumption}{\refstepcounter{assum}\theassumption~}
\newcommand{\mb}{\mathbf}
\newcommand{\mc}{\mathcal}
\newcommand{\mbb}{\mathbb}
\newcommand{\mr}{\mathrm}
\newcommand{\mf}{\mathfrak}
\newcommand{\Init}{\mr{Init}}
\newcommand{\KeyGen}{\mr{KeyGen}}
\newcommand{\Eval}{\mr{Eval}}
\newcommand{\Add}{\mr{Add}}
\newcommand{\Mult}{\mr{Mlt}}
\newcommand{\cMult}{\mr{cMlt}}
\renewcommand{\E}{\mr{E}}
\renewcommand{\D}{\mr{D}}
\newcommand{\hatKeyGen}{\mr{\hat{KeyGen}}}
\newcommand{\hatE}{\mr{\hat{E}}}
\newcommand{\hatD}{\mr{\hat{D}}}
\newcommand{\hatC}{\mr{\hat{\mc{C}}}}
\newcommand{\LabHE}{$\mr{LabHE}$ }
\newcommand{\AHE}{$\mr{AHE}$ }
\definecolor{aquamarine}{RGB}{67,167,133}
\newcommand{\numberthis}{\addtocounter{equation}{1}\tag{\theequation}}
\DeclareMathOperator*{\argmin}{arg\,min}
\newcounter{protocol}
\newenvironment{protocol}[1][]%
  {
    \needspace{2\baselineskip}
    \noindent \rule{\linewidth}{1pt} \endgraf
    \refstepcounter{protocol}
    \centering \textsc{Protocol}~\theprotocol%
    \ifthenelse{\isempty{#1}}{}{:\ #1}
  }{
  \vspace{-2pt}
  \noindent \rule{\linewidth}{1pt}\vspace{2pt}
  }
  \newcommand{\StatexIndent}[1][3]{%
  \setlength\@tempdima{\algorithmicindent}%
  \Statex\hskip\dimexpr#1\@tempdima\relax}
\begin{document}

\title*{Secure Multi-party Computation for Cloud-based Control}
\author{Andreea B. Alexandru and George J. Pappas}
\institute{Andreea B. Alexandru,  George J. Pappas \at University of Pennsylvania, Department of Electrical Engineering, Philadelphia, PA, USA, \email{aandreea@seas.upenn.edu, pappasg@seas.upenn.edu}}
\maketitle

\abstract{In this chapter, we will explore the cloud-outsourced privacy-preserving computation of a controller on encrypted measurements from a (possibly distributed) system, taking into account the challenges introduced by the dynamical nature of the data. The privacy notion used in this work is that of cryptographic multi-party privacy, i.e., the computation of a functionality should not reveal \textit{anything} more than what can be inferred only from the inputs and outputs of the functionality. The main theoretical concept used towards this goal is Homomorphic Encryption, which allows the evaluation of sums and products on encrypted data, and, when combined with other cryptographic techniques, such as Secret Sharing, results in a powerful tool for solving a wide range of secure multi-party problems. We will rigorously define these concepts and discuss how multi-party privacy can be enforced in the implementation of a Model Predictive Controller, which encompasses computing stabilizing control actions by solving an optimization problem on encrypted data.}

\section{Introduction}
\label{sec:introduction}

Cloud computing has become a ubiquitous tool in the age of big data and geographically-spread systems, due to the capabilities of resource pooling, broad network access, rapid elasticity, measured service and on-demand self-service, as defined by NIST~\cite{Mell2011nist}. 
The computational power and storage space of a cloud service can be distributed over multiple servers. 
Cloud computing has been employed for machine learning applications in e.g., healthcare monitoring and social networks, smart grid control and other control engineering applications, and integration with the Internet of Things paradigm~\cite{Rittinghouse2016cloud,Botta2016integration}. However, these capabilities do not come without risks. The security and privacy concerns of cloud computing range from communication security to leaking and tampering with the stored data or interfering with the computation, that can be maliciously or unintentionally exploited by the cloud provider or the other tenants of the service~\cite{Ali2015security,Singh2016survey,Hamlin2016cryptography}. In this chapter, we will focus on concerns related to the privacy of the data and computation. These issues can be addressed by the cryptographic tools described below.

\textbf{Secure Multi-Party Computation} (SMPC) encompasses a range of cryptographic techniques that facilitate joint computation over secret data distributed between multiple parties, that can be both clients and servers. The goal of SMPC is that each party is only allowed to learn its own result of the computation, and no intermediary results such as inputs or outputs of other parties or other partial information. The concept of SMPC originates from~\cite{Yao82}, where a secure solution to the millionaire's problem was proposed. Surveys on SMPC can be found in~\cite{Cramer15}. 
SMPC involves communication between parties and can include individual or hybrid approaches between techniques such as secret sharing~\cite{Shamir1979share,Pedersen91,Beimel2011secret}, oblivious transfer~\cite{Naor2001efficient,Rabin2005exchange}, garbled circuits~\cite{Yao82,Goldreich87, Bellare12}, (threshold) homomorphic encryption~\cite{Rivest78data,Gentry2009fully,Naehrig2011can} etc. 

\textbf{Homomorphic Encryption} (HE), introduced in~\cite{Rivest78data} as \textit{privacy homomorphism}, refers to a secure computation technique that allows evaluating computations on encrypted data and produces an encrypted result. HE is best suited when there is a client-server scenario with an untrusted server: the client simply has to encrypt its data and send it to the server, which performs the computations on the encrypted data and returns the encrypted result. The first HE schemes were partial, meaning that they either allowed the evaluation of additions or multiplications, but not both. Then, somewhat homomorphic schemes were developed, which allowed a limited number of both operations. One of the bottlenecks for obtaining an unlimited number of operations was the accumulation of noise introduced by one operation, which could eventually prevent the correct decryption. The first fully homomorphic encryption scheme that allowed the evaluation of both additions and multiplications on encrypted data was developed in~\cite{GentryPhD}, where, starting from a somewhat homomorphic encryption scheme, a bootstrapping operation was introduced. Bootstrapping allows to obliviously evaluate the scheme's decryption circuit and reduces the ciphertext noise. Other fully homomorphic encryption schemes include~\cite{Brakerski11,brakerski2012fully,fan2012somewhat,gentry2013homomorphic,brakerski2014leveled,cheon2017homomorphic}. For a thorough history and description of HE, see the survey~\cite{Martins2018survey}. 
Privacy solutions based on HE were proposed for genome matching, national security and critical infrastructure, healthcare databases, machine learning applications and control systems etc.~\cite{Archer2017applications, Aslett2015review, Riazi2018deep}. Of particular interest to us are the the works in control applications with HE, see~\cite{Gonzalez14,Kim16encrypting,Farokhi17,Darup18towards,Alexandru2018cloud,Murguia2018secure,Alexandru2019encrypted}, to name a few. Furthermore, there has been a soaring interest in homomorphically encrypted machine learning applications, from statistical analysis and data mining~\cite{Aslett2015review} to deep learning~\cite{Riazi2018deep}. 

Over the past decades, efforts in optimizing the computation and implementation of SMPC techniques, along with the improvement of network communication speed and more powerful hardware, have opened the way of SMPC deployment in \mbox{real-world} applications. Nevertheless, in the context of big data and the internet of things, which bring enormous amounts of data and large number of devices that want to participate in the computation, SMPC techniques lag behind plaintext computations due to the computational and communication bottleneck~\cite{Mirhoseini2016cryptoml,Mohassel17,Chen2018logistic}. 

The privacy definition for SMPC stipulates that the privacy of the inputs and intermediary results is ensured, but the output, which is a function of the inputs of all parties, is revealed. For applications such as smart meter aggregation~\cite{Acs2011have}, social media activity~\cite{Sala2011sharing}, health records~\cite{Dankar2012application}, deep learning~\cite{Abadi2016deep}, and where output privacy is valued over output accuracy, the SMPC privacy definition is not enough~\cite{Zhu2017differential} and needs to be augmented by guarantees of differential privacy. 

\textbf{Differential Privacy} (DP) refers to methods of concealing the information leakage from the result of the computation, even when having access to auxiliary information~\cite{Dwork06,Dwork08}. Intuitively, the contribution of the input of each individual to the output should be hidden from those who access the computation results. To achieve this, a carefully chosen noise is added to each entry such that the statistical properties of the database are preserved~\cite{Dwork2014algorithmic}, which introduces a trade-off between utility and privacy. 
When applied in a distributed system~\cite{Dwork2004privacy,Goyal2016distributed,Vadhan2017complexity}, DP provides a problematic solution for smaller number of entries in the dataset, as all parties would add noise that would completely drown the result of the computation. 
Several works combine SMPC with DP in order to achieve both computation privacy and output privacy, for instance~\cite{Rastogi2010differentially,Shi2011privacy,Pettai2015combining,Chase2017private,VadhanDPMPC2018}. 
An intuitive comparison between privacy-preserving centralized and decentralized computation approaches, with different privacy goals and utilities is given in Table~\ref{tab:comparison}, taken from~\cite{VadhanDPMPC2018}. 

\begin{table}[h]
\begin{center}
\vspace{-\topsep}
  \begin{tabular}{ p{0.25\textwidth} | p{0.2\textwidth} | p{0.29\textwidth} | p{0.21\textwidth} }
    \hline
    Model & Utility & Privacy & Who Holds the Data \\
    \hline \hline
    \raggedright Fully Homomorphic (or Functional) Encryption & \raggedright any desired query & \raggedright everything (except possibly the result of the query) & untrusted server \\ \hline
    \raggedright Secure Multi-Party Computation & \raggedright any desired query & \raggedright everything other than result of the query & original users or delegates \\ \hline
    \raggedright Centralized Differential Privacy & \raggedright statistical analysis of dataset & \raggedright individual-specific information & trusted curator \\ \hline
    \raggedright Multi-Party Differential Privacy & \raggedright statistical analysis of dataset & \raggedright individual-specific information & original users or delegates \\
    \hline
  \end{tabular}
  \caption{Comparison between centralized and multi-party privacy-preserving approaches~\cite{VadhanDPMPC2018}.}
  \vspace{-\topsep}
  \label{tab:comparison}
\end{center}
\end{table}
\vspace{-2\topsep}

In this chapter, we focus on proposing SMPC schemes for optimization and control problems, where we require the convergence and accuracy of the results. The scenario we consider will involve multiple clients and untrusted server(s). DP techniques can be further investigated in order to ensure output privacy, but we will not explore this avenue in the present chapter. 

\subsection{Dynamical data challenges}
\label{subsec:dynamical_data}
Cryptographic techniques were developed mainly for static data, such as databases, or independent messages. However, dynamical systems are iterative processes that generate structured and dependent data. Moreover, output data at one iteration/time step will often be an input to the computation at the next one. Hence, special attention is needed when using cryptographic techniques in solving optimization problems and implementing control schemes. For example, values encrypted with homomorphic encryption schemes will require ciphertext refreshing or bootstrapping if the multiplicative depth of the algorithm exceeds the multiplicative depth of the scheme; when using garbled circuits, a different circuit has to be generated for different iterations/time steps of the same algorithm; the controlled noise added for differential privacy at each iteration/time step will accumulate and drown the result etc. Furthermore, privacy is guaranteed as long as the keys and randomness are never reused, but freshly generated for each time step; then, the (possibly offline) phase in which uncorrelated randomness is generated, (independent from the actual inputs), has to be repeated for a continuously running process. In this chapter, we design the solution with all these issues in mind.

\subsection{Contribution and roadmap}
\label{subsec:roadmap}
Examples of control applications that require privacy include smart metering, crucial infrastructure control, prevention of industrial espionage, swarms of robots deployed in adversarial environments. In order to ensure the privacy of a control application, i.e., secure both the signals and the model, as well as the intermediate computations, most of the time, complex solutions have to be devised. 
We investigate a privacy-preserving cloud-based Model Predictive Control application, which, at a high level, requires privately computing additions, multiplications, comparisons and oblivious updates. Our solution encompasses several SMPC techniques: homomorphic encryption, secret sharing, oblivious transfer. The purpose of this chapter is to describe these cryptographic techniques and show how to combine them in order to build private cloud-based protocols that produce the desired control actions. 

The layout of the chapter is as follows: in Section~\ref{sec:MPC}, we describe the model predictive control problem that we address, we show how to compute the control action when there are no privacy requirements and we outline the privacy-preserving solution. In Section~\ref{sec:privacy_def}, we formally describe the adversarial model and privacy definition we seek for our multi-party computation problem. In Section~\ref{sec:cryptographic_tools}, we introduce the cryptographic tools that we will use in our solution and their properties. 
Then, in Section~\ref{sec:private_MPC}, we design the multi-party protocol for the model predictive control problem with encrypted states and model and prove its privacy. Finally, we discuss some details about the requirements of the proposed protocol in Section~\ref{sec:discussion}.

\subsection{Notation} 
We use bold-face lower case for vectors, e.g., $\mb x$, and bold-face upper case for matrices, e.g., $\mb A$. $\mbb N$ denotes the set of non-negative integers, $\mbb Z$ denotes the set of integers, $\mbb Z_N$ denotes the additive group of integers modulo $N$. For $n\in \mbb N$, let $[n]:=\{1,2,\ldots,n\}$. $\lambda$ denotes the security parameter and $1^\lambda$ is a standard cryptographic notation for the unary representation of $\lambda$, given as input to algorithms. 
$\E(x)$ and $[[x]]$ represent encryptions of the scalar value $x$. We use the same notation for multidimensional objects: an encryption of a matrix $\mb A$ is denoted by $\E(\mb A)$ (or $[[\mb A]]$) and is performed element-wise. 
A function $f:\mbb Z_{\geq q}\rightarrow \mbb R$ is called negligible if for all $c\in \mbb R_{> 0}$, there exists $n_0\in\mbb Z_{\geq 1}$ such that for all integers $n\geq n_0$, we have $|f(n)|\leq 1/n^c$. 
$\{0,1\}^\ast$ defines a sequence of bits of unspecified length.

\section{Model Predictive Control}
\label{sec:MPC}
We consider a discrete-time linear time-invariant system:
\begin{align}\label{eq:system}
\begin{split}
	\vec{x}(t+1) &= \vec{A} \vec{x}(t) + \vec{B} \vec{u}(t),\\
	\vec{x}(t) = \left[\begin{matrix} {\vec{x}^1(t)}^\intercal & \ldots & {\vec{x}^M(t)}^\intercal \end{matrix} \right], &\quad \vec{u}(t) = \left [ \begin{matrix} {\vec{u}^1(t)}^\intercal & \ldots & {\vec{u}^M(t)}^\intercal\end{matrix}\right],
\end{split}
\end{align}
with the state $\vec{x}\in\mc X\subseteq\mathbb R^{n}$ and the control input $\vec{u}\in\mc U\subseteq\mathbb R^{m}$. The system can be either centralized or partitioned in subsytems for $i\in[M]$, $\vec{x}^i(t) \in \mbb R^{n_i}$, $\sum_{i=1}^M n_i = n$ and $\vec{u}^i(t) \in \mbb R^{m_i}$, $\sum_{i=1}^M m_i = m$. The subsystems can be thought of as different agents or clients. We will address a privacy problem -- the agents desire to keep their local data private -- so it is more reasonable to consider partitions rather than a non-empty intersection of the states and control inputs. 

The Model Predictive Control (MPC) is the optimal control receding horizon problem with constraints written as:
\begin{align}\label{eq:mpc}
\begin{split}
	J_N^\ast(\vec{x}(t)) =& \min\limits_{\vec{u}_{0},\ldots,\vec{u}_{N-1}}\frac{1}{2} \left(\vec{x}_N^\intercal \vec{P} \vec{x}_N + \sum_{k=0}^{N-1} \vec{x}_k^\intercal \vec{Q} \vec{x}_k + \vec{u}_k^\intercal \vec{R} \vec{u}_k \right)\\
	s.t.&~ \vec{x}_{k+1} = \vec{A} \vec{x}_k + \vec{B} \vec{u}_k,~k=0,\ldots,N-1; ~~\vec{x}_0 = \vec{x}(t); \\
	&~\vec{u}_k\in \mc U,~k=0,\ldots,N-1,\\
\end{split}
\end{align}
where $N$ is the length of the horizon and $\vec{P},\vec{Q},\vec{R}\succ 0$ are cost matrices. 
We consider input constrained systems with box constraints $\vec{0}\in\mc U = \{\vec{l}_u\preceq \vec{u}\preceq \vec{h}_u\}$, and impose stability without a terminal state constraint, but with appropriately chosen costs and horizon, such that the \mbox{closed-loop} system has robust performance to bounded errors due to encryption. 
A survey on the conditions for stability of MPC is given in~\cite{Mayne00}. 

Through straightforward manipulations,~\eqref{eq:mpc} can be written as the quadratic problem~\eqref{eq:mpc(i)} -- see details on obtaining the matrices $\vec{H}$ and $\vec{F}$ in~\cite[Ch.~8,11]{MPCbook17} -- in the variable $\vec{U}:= \left[\vec{u}_0^\intercal \ \vec{u}_1^\intercal \ \ldots \ \vec{u}_{N-1}^\intercal \right]^\intercal$. 
For simplicity, we keep the same notation $\mc U$ for the augmented constraint set. 
After obtaining the optimal solution, the first $m$ components of $\vec{U}^\ast(\vec{x}(t))$ are applied as input to the system~\eqref{eq:system}: $\vec{u}^\ast(\vec{x}(t)) =\{\vec{U}^\ast(\vec{x}(t))\}_{1:m}$. 
\begin{equation}\label{eq:mpc(i)}
	\vec{U}^\ast(\vec{x}(t))~= \argmin\limits_{\vec{U}\in \mc U} \frac{1}{2} \vec{U}^\intercal \vec{H} \vec{U} + \vec{U}^\intercal \vec{F}^\intercal \vec{x}(t).
\end{equation}

\subsection{Solution without privacy requirements}
The privacy-absent cloud-MPC problem is depicted in Figure~\ref{fig:MPC}. The system~\eqref{eq:system} is composed of $M$ subsystems, that can be thought of as different agents, which measure their states and receive control actions, and of a setup entity which holds the system's model and parameters. The control decision problem is solved at the cloud level, by a cloud controller, which receives the system's model and parameters, the measurements, as well as the constraint sets imposed by each subsystem. The control inputs are then applied by one virtual actuator. Examples of such architecture include a smart building temperature control application, where the subsystems are apartments and the actuator is a machine in the basement, or the subsystems are robots in a swarm coordination application and the actuator is a ground control that sends them waypoints.

\begin{figure}[ht]
\centering
\vspace{-1.5\topsep}
    \includegraphics[width=0.72\textwidth]{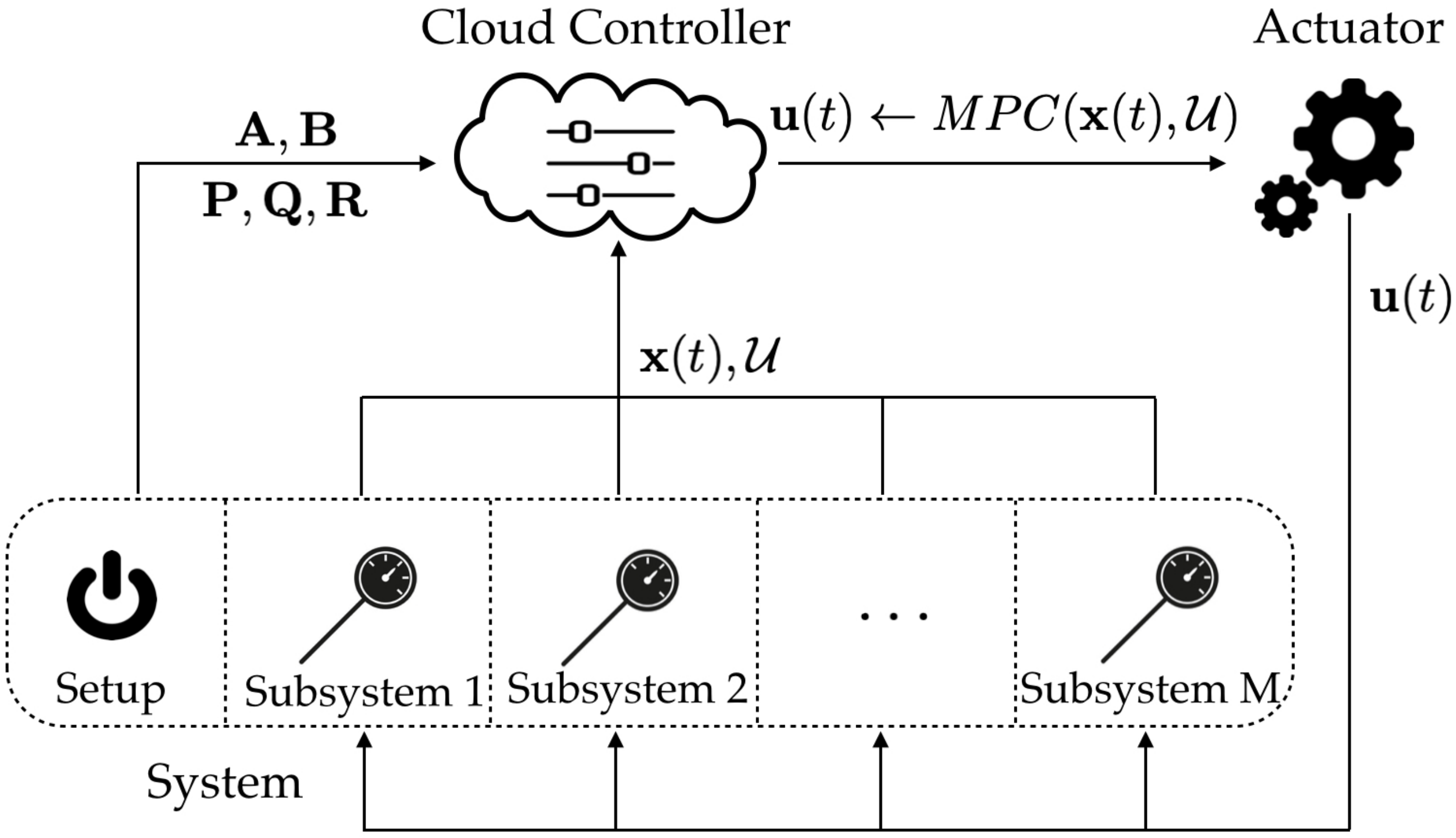}
  \caption{
  Cloud-based MPC problem for a system composed of a number of subsystems that measure their states and a setup entity which holds the system's model and parameters. The control action is computed by a cloud controller and sent to a virtual actuator.
  }
   \label{fig:MPC}
  \vspace{-\topsep} 
\end{figure}

The constraint set $\mc U$ is a hyperbox, so the projection step required for solving~\eqref{eq:mpc(i)} has a closed form solution, denoted by $\Pi_{\mc U}(\cdot)$ and the optimization problem can be efficiently solved with the projected Fast Gradient Method (FGM)~\cite{Nesterov13book}, given in~\eqref{eq:FGM}:%
\begin{subequations}\label{eq:FGM}
\begin{align}
&\text{For}~k=0\ldots,K-1 \notag\\
 	&\quad\quad \vec{t}_k \leftarrow \left(\vec{I}_{Mm} - \frac{1}{L}\vec{H}\right)\vec{z}_k - \frac{1}{L}\vec{F}^\intercal \vec{x}(t) \label{eq:iter_t}\\
	&\quad\quad \vec{U}_{k+1} \leftarrow \Pi_{\mc U}(\vec{t}_k) \label{eq:iter_U} \\
	&\quad\quad \vec{z}_{k+1} \leftarrow (1+\eta)\vec{U}_{k+1} - \eta \vec{U}_{k} \label{eq:iter_z}
\end{align}
\end{subequations}
where $\vec{z}_0 \leftarrow \vec{U}_0$. The objective function is strongly convex, since \mbox{$\vec{H}\succ 0$}, therefore we can use the constant step sizes \mbox{$L = \lambda_{max}(\vec{H})$} and \mbox{$\eta = (\sqrt{\kappa(\vec{H})}-1)/(\sqrt{\kappa(\vec{H})}+1)$}, where $\kappa(\vec{H})$ is the condition number of~$\vec{H}$. 
Warm starting can be used at subsequent time steps of the receding horizon problem by using part of the previous solution~$\vec{U}_K$ to construct a feasible initial iterate of the new optimization problem.

\subsection{Privacy objectives and overview of solution}

The system model and MPC costs $\vec{A},\vec{B}, \vec{P}, \vec{Q},\vec{R}$ are known only to the system, but not to the cloud and actuator, hence, the matrices $\vec{H}, \vec{F}$ in~\eqref{eq:mpc(i)} are also private. The measurements and constraints are not known to parties other than the corresponding subsystem and should remain private, such that the sensitive information of the subsystems is concealed. The control inputs should not be known by the cloud.

The goal of this work is to design a private cloud-outsourced version of the fast gradient method in~\eqref{eq:FGM} for the model predictive control problem, such that the actuator obtains the control action $\vec{u}^\ast(t)$ for system~\eqref{eq:system}, without learning anything else in the process and with only a minimum amount of computation. 
At the same time, the cloud controller should not learn anything other than what was known prior to the computation about the measurements $\vec{x}(t)$, the control inputs $\vec{u}^\ast(t)$, the constraints~$\mc U$, and the system model $\vec{H},\vec{F}$. 
We formally introduce the adversarial model and multi-party privacy definition in Section~\ref{sec:privacy_def}. 

As a primer to our private solution to the MPC problem, we briefly mention here the cryptographic tools used to achieve privacy. We will encrypt the data with a \textit{labeled homomorphic encryption} scheme, which allows us to evaluate an unlimited number of additions and one multiplication over encrypted data. The labeled homomorphic encryption builds on top of an \textit{additively homomorphic encryption} scheme, which allows only the evaluation of additions over encrypted data, a \textit{secret sharing} scheme, which enables the splitting of a message into two random shares, that cannot be used individually to retrieve the message, and a \textit{pseudorandom generator} that, given a key and a small seed, called \textit{label}, outputs a larger sequence of bits that is indistinguishable from random. The right choice of labels is essential for a seamless application of labeled homomorphic encryption on dynamical data, and we choose the labels to be the time steps at which the data is generated. These tools ensure that we can evaluate polynomials on the private data. Furthermore, the computations for determining the control action also involve projections on a feasible hyperbox. To achieve this in a private way, we make use of \textit{two-party private comparison} that involves exchanges of encrypted bits between two parties, and \textit{oblivious transfer}, that allows us to choose a value out of many values when the index is secret. These cryptographic tools will be described in detail in Section~\ref{sec:cryptographic_tools}, and our private cloud-based MPC solution that incorporates them will be presented in Section~\ref{sec:private_MPC}.

We present a standard framework in this chapter, but the same tools can be also applied on variations of the problem and architecture.

\section{Adversarial model and Privacy definition}
\label{sec:privacy_def}
In cloud applications, the service provider has to deliver the contracted service that was agreed upon, otherwise the clients switch to another service provider. This incentivizes the cloud to not alter the data it receives. 
The clients' interest is to obtain the correct result for the service they pay for, so we may assume they do not alter the data sent to the cloud. However, the parties can locally process copies of the data they receive in any fashion they want. 
This adversarial model is known as \mbox{semi-honest}, which is defined formally as follows:

\begin{definition}(\mbox{Semi-honest model}~\cite[Ch.~7]{Goldreich04foundationsII})
A party is \mbox{semi-honest} if it does not deviate from the steps of the protocol, but may store the transcript of the messages exchanged and its internal coin tosses, as well as process the data received in order to learn more information than stipulated by the protocol.
\end{definition}

This model also holds when considering eavesdroppers on the communication channels. Malicious and active adversaries -- that diverge from the protocols or tamper with the messages -- are not considered in this chapter. Privacy against malicious adversaries can be obtained by introducing zero-knowledge proofs and commitment schemes, at the cost of a computational overhead~\cite{Goldreich04foundationsII,Damgaard2010multiparty}.

We introduce some concepts necessary for the multi-party privacy definitions. An ensemble $X = \{X_\sigma\}_{\sigma\in\mathbb N}$ is a sequence of random variables ranging over strings of bits of length polynomial in $\sigma$, arising from distributions defined over a finite set~$\Omega$. 

\begin{definition}(Statistical indistinguishability~\cite[Ch.~3]{Goldreich03foundationsI})\label{def:stat_ind} 
~The ensembles $X= $ $\{X_\sigma\}_{\sigma\in\mathbb N}$ and $Y=\{Y_\sigma\}_{\sigma\in\mathbb N}$ are \textbf{statistically indistinguishable}, denoted $\stackrel{s}{\equiv}$, if for every positive polynomial $p$, and all sufficiently large $\sigma$:
\[\mr{SD} [X_\sigma,Y_\sigma]:=\frac{1}{2}\sum_{\alpha\in\Omega} \big|\Pr[X_\sigma = \alpha] - \Pr[Y_\sigma = \alpha] \big|< \frac{1}{p(\sigma)}.\]
The quantity on the left is called the statistical distance between the two ensembles.
\end{definition}

It can be proved that two ensembles are statistically indistinguishable if no algorithm can distinguish between them. Statistical indistinguishability holds against computationally unbounded adversaries. Computational indistinguishability is a weaker notion of the statistical version, as follows:

\begin{definition}\label{def:comp_ind}(Computational Indistinguishability~\cite[Ch.~3]{Goldreich03foundationsI}) 
The ensembles $X=\{X_\sigma\}_{\sigma\in\mathbb N}$ and $Y=\{Y_\sigma\}_{\sigma\in\mathbb N}$ are \textbf{computationally indistinguishable}, denoted $\stackrel{c}{\equiv}$, if for every probabilistic polynomial-time algorithm $D:\{0,1\}^\ast\rightarrow \{0,1\}$, called the distinguisher, every positive polynomial $p$, and all sufficiently large $\sigma$:
\[\big |\Pr_{x\leftarrow X_\sigma} [D(x) = 1] - \Pr_{y\leftarrow Y_\sigma}[D(y) = 1] \big| < \frac{1}{p(\sigma)}.\]
\end{definition}

Let us now look at the privacy definition that is considered in secure \mbox{multi-party} computation, that makes use of the real and ideal world paradigms. Consider a \mbox{multi-party} protocol $\Pi$ that executes a functionality $f = (f_1,\ldots,f_p)$ on inputs $I = (I_1,\ldots,I_p)$ and produces an output $f(I)=(f_1(I),\ldots,f_p(I))$ in the following way: the parties have their inputs, then exchange messages between themselves in order to obtain an output. If an adversary corrupts a party (or a set of parties) in this real world, after the execution of the protocol, it will have access to its (their) input, the messages received and its (their) output. In an ideal world, there is a trusted incorruptible party that takes the inputs from all the parties, computes the functionality on them, and then sends to each party its output. In this ideal world, an adversary that corrupts a party (or a set of parties) will only have access to its (their) input and output. This concept is illustrated for three parties in Figure~\ref{fig:real_ideal}. We then say that a multi-party protocol achieves computational privacy if, for each party, what a computationally bounded adversary finds out from the real world execution is equivalent to what it finds out from the ideal-world execution. 

\vspace{-\topsep}
\begin{figure}[ht]
\centering
\includegraphics[width=0.7\textwidth]{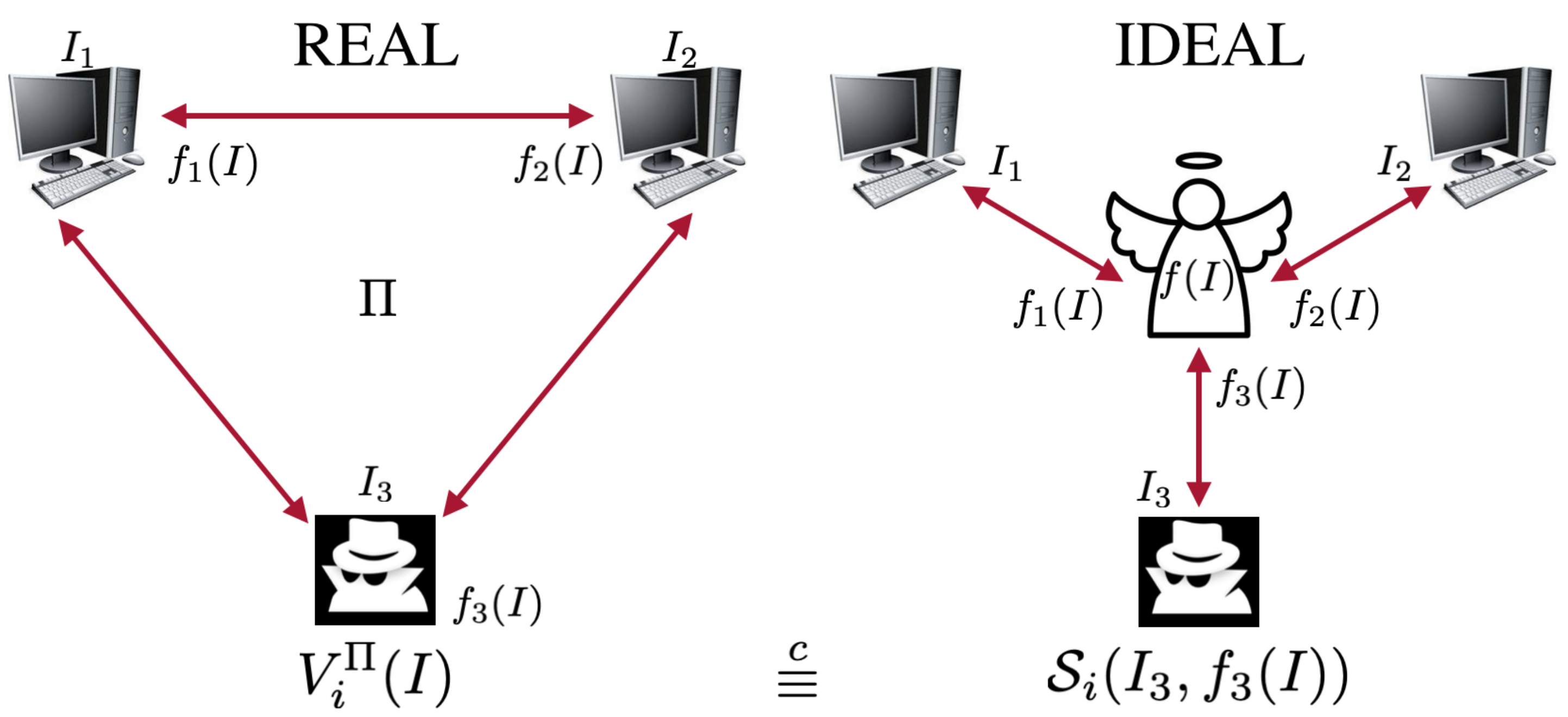}
\caption{Real and ideal paradigms for secure multi-party computation.}
\label{fig:real_ideal} 
\end{figure}
\vspace{-\topsep}

The formal definition is the following:
\begin{definition}\label{def:coalition_view}(Multi-party privacy w.r.t. \mbox{semi-honest} behavior~\cite[Ch.~7]{Goldreich04foundationsII}) 
 Let $f:(\{0,1\}^\ast)^p \rightarrow (\{0,1\}^\ast)^p$ be a $p$-ary functionality, where $f_i(x_1,\ldots,x_p)$ denotes the $i$-th element of $f(x_1,\ldots,x_p)$. Denote the inputs by $\bar x = (x_1,\ldots,x_p)$. For $I=\{i_1,\ldots,i_t\}\subset [p]$, we let $f_I(\bar x)$ denote the subsequence $f_{i_1}(\bar x),\ldots,$ $f_{i_t}(\bar x)$, which models a coalition of a number of parties. Let $\Pi$ be a $p$-party protocol that computes~$f$. The \textbf{view} of the $i$-th party during an execution of $\Pi$ on the inputs $\bar x$, denoted $V^\Pi_i (\bar x)$, is $(x_i,\mr{coins},m_1,\ldots,m_t)$, where $\mr{coins}$ represents the outcome of the $i$'th party's internal coin tosses, and $m_j$ represents the $j$-th message it has received. We let the view of a coalition be denoted by $V_I^\Pi (\bar x) = (I,V_{i_1}^\Pi(\bar x),\ldots,V_{i_t}^\Pi(\bar x))$. For a deterministic functionality $f$, we say that $\boldsymbol \Pi$ \textbf{privately computes} $\boldsymbol f$ if there exist simulators $S$, such that, for every $I\subset [p]$, it holds that, for $\bar x_t = (x_{i_1},\ldots,x_{i_t})$:
 \begin{align*}
\big\{S(I,\bar x_t,f_I(\bar x)) \big\}_{\bar x \in (\{0,1\}^\ast)^p} &\stackrel{c}{\equiv} 
\big\{V_I^\Pi (\bar x)\big\}_{\bar x \in (\{0,1\}^\ast)^p}.
\end{align*}
\end{definition}

This definition assumes the correctness of the protocol, i.e., the probability that the output of the parties is not equal to the result of the functionality applied to the inputs is negligible~\cite{Goldreich04foundationsII,Lindell17}. Auxiliary inputs, which are inputs that capture additional information available to each of the parties, (e.g., local configurations, side-information), are implicit in this definition~\cite[Ch.~7]{Goldreich04foundationsII}, \cite[Ch.~4]{Goldreich03foundationsI}.

Definition~\ref{def:coalition_view} states that the view of any of the parties participating in the protocol, on each possible set of inputs, can be simulated based only on its own input and output. 
For parties that have no assigned input and output, like the cloud server in our problem, Definition~\ref{def:coalition_view} captures the strongest desired privacy model.

\section{Cryptographic tools}
\label{sec:cryptographic_tools}

\subsection{Secret sharing}
\label{subsec:SS}
Secret sharing~\cite{Shamir1979share,Pedersen91} is a tool that distributes a private message to a number of parties, by splitting it into random shares. Then, the private message can be reconstructed only by an authorized subset of parties, which combine their shares. 

One common and simple scheme is the additive \mbox{2-out-of-2} secret sharing scheme, which involves a party splitting its secret message $m\in G$, where $G$ is a finite abelian group, into two shares, in the following way: generate uniformly at random an element $\mf{b}\in G$, subtract it from the message and then distribute the shares $\mf{b}$ and $m-\mf{b}$. This can be also thought of as a one-time pad~\cite{Vernam1926,Bellovin11} variant on $G$. Both shares are needed in order to recover the secret. 
The 2-out-of-2 secret sharing scheme achieves perfect secrecy, which means that the shares of two distinct messages are uniformly distributed on~$G$~\cite{Cramer2012secure}.

We will also use an additive blinding scheme weaker than secret sharing (necessary for the private comparison protocol in Section~\ref{subsec:comparison}): for messages of $l$ bits, $\mf{b}$ will be generated from a message space~$\mc M$ with length of $\lambda+l$ bits, where $\lambda$ is the security parameter, with the requirement that $\lambda + l $-bit messages can still be represented in $\mc M$, i.e., there is no wrap-around and overflow. 
The distribution of $m+\mf{b}$ is statistically indistinguishable from a random number sampled of $l+\lambda+1$ bits. 
Such a scheme is commonly employed for blinding messages, for instance in~\cite{Veugen10,Jeckmans13,Bost15}.

\vspace{-1.5\topsep}
\subsection{Pseudorandom generators}\label{subsec:PRG}
Pseudorandom generators are efficient deterministic functions that expand short seeds into longer pseudorandom bit sequences, that are computationally indistinguishable from truly random sequences. More details can be found in~\cite[Ch. 3]{Goldreich03foundationsI}.

\vspace{-1.5\topsep}
\subsection{Homomorphic Encryption}
\label{subsec:HE}
Let $\E(\cdot)$ denote a generic encryption primitive, with domain the space of private data, called \textbf{plaintexts}, and codomain the space of encrypted data, called \textbf{ciphertexts}. $\E(\cdot)$ also takes as input the public key, and probabilistic encryption primitives also take a random number. The decryption primitive $\D(\cdot)$ is defined on the space of ciphertexts and takes values on the space of plaintexts. $\D(\cdot)$ also takes as input the private key. \textbf{Additively homomorphic} schemes satisfy the property that there exists an operator~$\oplus$ defined on the space of ciphertexts such that: 
\begin{equation}~\label{eq:abstract_additive}
\E(a)\oplus \E(b) \subset \E(a+b),
\end{equation}
for any plaintexts $a,b$ supported by the scheme. We use set inclusion instead of equality because the encryption of a message is not unique in probabilistic cryptosystems. 
Intuitively, equation~\eqref{eq:abstract_additive} means that by performing this operation on the two encrypted messages, we obtain a ciphertext that is equivalent to the encryption of the sum of the two plaintexts. Formally, the decryption primitive $\D(\cdot)$ is a homomorphism between the group of ciphertexts with the operator $\oplus$ and the group of plaintexts with addition $+$, which justifies the name of the scheme. It is immediate to see that if a scheme supports addition between encrypted messages, it will also support subtraction, by adding the additive inverse, and multiplication between an integer plaintext and an encrypted message, obtained by adding the encrypted messages for the corresponding number of times. 

Furthermore, \textbf{multiplicatively homomorphic} schemes satisfy the property that there exists an operator $\otimes$ defined on the space of ciphertexts such that: 
\begin{equation}~\label{eq:abstract_multiplicative}
\E(a)\otimes \E(b) \subset \E(a\cdot b),
\end{equation}
for any plaintexts $a,b$ supported by the scheme. 
If the same scheme satisfies both~\eqref{eq:abstract_additive} and~\eqref{eq:abstract_multiplicative} for an unlimited amount of operations, it is called \textbf{fully homomorphic}. If a scheme satisfies both~\eqref{eq:abstract_additive} and~\eqref{eq:abstract_multiplicative} but only for a limited amount of operations, it is called \textbf{somewhat homomorphic}. 

\begin{remark}
	A homomorphic cryptosystem is malleable, which means that a party that does not have the private key can alter a ciphertext such that another valid ciphertext is obtained. Malleability is a desirable property in order to achieve \mbox{third-party} outsourced computation on encrypted data, but allows ciphertext attacks. 
In this work, we assume that the parties have access to authenticated channels, therefore an adversary cannot alter the messages sent by the honest parties.
\end{remark}

\subsubsection{Additively homomorphic cryptosystem}
\label{subsubsec:paillier}

Additively Homomorphic Encryptions schemes, abbreviated as $\mr{AHE}$, allow a party that only has encryptions of two messages to obtain an encryption of their sum and can be instantiated by various public key additively homomorphic encryption schemes such as~\cite{GM82,Paillier99,Damgaard01,Joye2013efficient}. 
Let \AHE=$(\hatKeyGen, \hatE,\hatD, \hat{\Add}, \hat{\cMult})$ be an instance of an asymmetric additively homomorphic encryption scheme, with $\mc M$ the message space and $\hatC$ the ciphertext space, where we will use the following abstract notation: $\hat\oplus$ denotes the addition on $\hatC$ and $\hat\otimes$ denotes the multiplication between a plaintext and a ciphertext. We save the notation without $\hat{(\cdot)}$ for the scheme in Section~\ref{subsubsec:LabHE}. Asymmetric or public key cryptosystems involve a pair of keys: a public key that is disseminated publicly, and which is used for the encryption of the private messages, and a private key which is known only to its owner, used for the decryption of the encrypted messages. We will denote the encryption of a message $m\in \mc M$ by $[[m]]$ as a shorthand notation for $\hatE(\mr{public~key},m)$.

\begin{enumerate}
\item $\hatKeyGen(1^\sigma)$: Takes the security parameter~$\sigma$ and outputs a public key $\hat{\mr{pk}}$ and a private key $\hat{\mr{sk}}$.

\item $\hatE(\hat{\mr{pk}},m)$: Takes the public key and a message $m\in\mc M$ and outputs a ciphertext $[[m]]\in\hatC$. 

\item $\hatD(\hat{\mr{sk}},c)$: Takes the private key and a ciphertext $c\in\hatC$ and outputs the message that was encrypted $m'\in\mc M$.

\item $\hat{\Add}(c_1,c_2)$: Takes ciphertexts $c_1,c_2\in\hatC$ and outputs ciphertext $c=c_1\hat\oplus c_2\in\hatC$ such that: 
$\hatD(\hat{\mr{sk}},c) = \hatD(\hat{\mr{sk}},c_1)+\hatD(\hat{\mr{sk}},c_2)$.

\item $\hat{\cMult}(m_1,c_2)$: Takes plaintext $m_1\in\mc M$ and ciphertext $c_2\in\hatC$ and outputs ciphertext $c=m_1\hat\otimes c_2\in\hatC$ such that:
$\hatD(\hat{\mr{sk}},c) = m_1\cdot\hatD(\hat{\mr{sk}},c_2)$.
\end{enumerate}

In this chapter, we use the popular Paillier encryption~\cite{Paillier99}, that has plaintext space $\mbb Z_N$ and ciphertext space $(\mbb Z_{N^2})^\ast$. Any other \AHE scheme that is semantically secure~\cite{GM82},~\cite[Ch. 4]{Goldreich04foundationsII} and circuit-private~\cite{Catalano2014boosting} can be employed. 

\subsubsection{Labeled homomorphic encryption}
\label{subsubsec:LabHE}

The model predictive control problem from Figure~\ref{fig:MPC}, and more general control decision problems as well, can be abstracted in the following general framework in Figure~\ref{fig:general}. Consider a cloud server that collects encrypted data from several clients. The data represents time series and is labeled with the corresponding time. A requester makes queries that can be written as multivariate polynomials over the data stored at the cloud server and solicits the result. 

\begin{figure}[ht]
  \centering
    \includegraphics[width=0.65\textwidth]{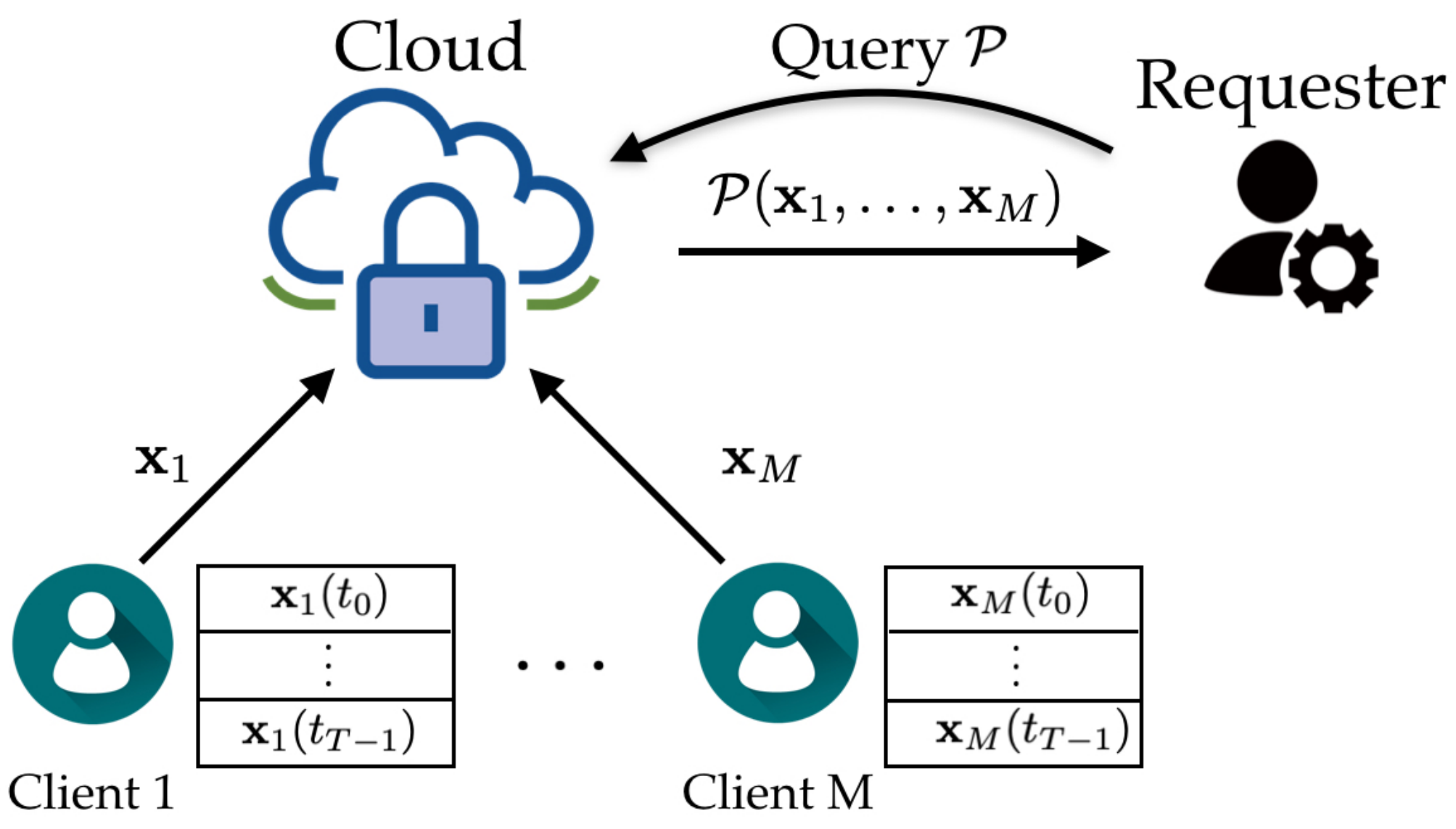}
  \caption{
  	The clients send their private data (collected over $T$ time steps or stored in a buffer) to a cloud server. A requester sends a query to the cloud, which evaluates it on the data and sends the result to the requester.
	}
\vspace{-\topsep}
   \label{fig:general}
\end{figure}

Labeled Homomorphic Encryption ($\mr{LabHE}$) can process data from multiple users with different private keys, as long as the requesting party has a master key. 
This scheme makes use of the fact that the decryptor (or requester in Figure~\ref{fig:general}) knows the query to be executed on the encrypted data, which we will refer to as a program. Furthermore, we want a cloud server that only has access to the encrypted data to be able to perform the program on the encrypted data and the decryptor to be able to decrypt the result. To this end, the inputs to the program need to be uniquely identified. Therefore, an encryptor (or client in Figure~\ref{fig:general}) assigns a unique label to each message and sends the encrypted data along with the corresponding encrypted labels to the server. Labels can be time instances, locations, id numbers etc. 

Denote by $\mc M$ the message space. An admissible function for \LabHE $f:\mc M^n \rightarrow \mc M$ is a multivariate polynomial of degree~2 on $n$ variables. A program that has labeled inputs is called a labeled program~\cite{Barbosa17labeled}:
\begin{definition}\label{def:labeled_prog}
A labeled program $\mc P$ is a tuple $(f,\tau_1,\allowbreak \ldots,\tau_n)$ where $f:\mc M^n \rightarrow \mc M$ is an admissible function on~$n$ variables and $\tau_i\in\{0,1\}^\ast$ is the label of the $i$-th input of $f$. Given $t$ programs $\mc P_1,\ldots,\mc P_t$ and an admissible function $g: \mc M^t$ $\rightarrow \mc M$, the composed program $\mc P^g$ is obtained~by evaluating $g$ on the outputs of $\mc P_1,\ldots,\mc P_t$, and can be denoted compactly as $\mc P^g = g(\mc P_1,\ldots,\mc P_t)$. The labeled inputs of $\mc P^g$ are all the distinct labeled inputs of $\mc P_1,\ldots,\mc P_t$. 
 \end{definition}
 
Let $f_{id}:\mc M \rightarrow \mc M$ be the identity function and $\mc \tau \in \{0,1\}^\ast$ be a label. Denote the identity program for input label $\tau$ by $\mc I_{\tau} = (f_{id},\tau)$. Any labeled program $\mc P = (f,\tau_1,\ldots,\tau_n)$, as in Definition~\ref{def:labeled_prog}, can be expressed as the composition of~$n$ identity programs $\mc P = f(\mc I_{\tau_1},\ldots,\mc I_{\tau_n})$.

\LabHE is constructed from an \AHE scheme with the requirement that the message space must be a public ring in which one can efficiently sample elements uniformly at random. The idea is that an encryptor splits their private message as described in Section~\ref{subsec:SS} into a random value (secret) and the difference between the message and the secret. For efficiency, instead of taking the secret to be a uniformly random value, we take it to be the output of a pseudorandom generator applied to the corresponding label. The label acts like the seed of the pseudo-random generator. The encryptor then forms the \LabHE ciphertext from the encryption of the first share along with the second share, yielding $\E(m)=(m - \mf{b},[[\mf{b}]])$, as described in Step~1 in the following. 
This enables us to compute one multiplication of two encrypted values and decrypt it, using the observation~\eqref{eq:LabHEhint}, as described in Steps~4 and 5. 
\begin{equation}\label{eq:LabHEhint}
m_1\cdot m_2 - \mf{b}_1\cdot \mf{b}_2 = (m_1 - \mf{b}_1)\cdot(m_2 - \mf{b}_2) + \mf{b}_1\cdot (m_2 - \mf{b}_2) + \mf{b}_2 \cdot (m_1 - \mf{b}_1). 
\end{equation}

Let $\mc M$ be the message space of the \AHE scheme, $\mc L \subset \{0,1\}^\ast$ denote a finite set of labels and $F:\{0,1\}^k \times \{0,1\}^\ast \rightarrow \mc M$ be a pseudorandom function that takes as inputs a key of size $k$ polynomial in $\sigma$ the security parameter, and a label from $\mc L$. Then \LabHE is defined as a tuple $\mr{LabHE}=(\Init,\KeyGen,\E,\Eval,\D)$:
\begin{itemize}
\item[1.] $\Init(1^\sigma)$: Takes the security parameter~$\sigma$ and outputs master secret key~$\mr{msk}$ and master public key~$\mr{mpk}$ for $\mathrm{AHE}$. 

\item[2.] $\KeyGen(\mr{mpk})$: Takes the master public key~$\mr{mpk}$ and outputs for each user~$i$ a user secret key~$\mr{usk_i}$ and a user public key~$\mr{upk_i}$.

\item[3.] $\E(\mr{mpk,upk},\tau,m)$: Takes the master public key, a user public key, a label $\tau\in \mc L$ and a message $m\in\mc M$ and outputs a ciphertext $C = (a,\beta)$. It is composed of an online and offline part:
	\item $\text{Off-E}(\mr{usk},\tau)$: Computes the secret $\mf{b}\leftarrow F(usk,\tau)$ and outputs $\mr{C_{off}} = (\mf{b},[[\mf{b}]])$.
	\item $\text{On-E}(\mr{C_{off}},m)$: Outputs $C=(m - \mf{b},[[\mf{b}]]) =: (a,\beta) \in \mc M\times \hat{\mc C}.$

\item[4.] $\Eval(\mr{mpk},f,C_1,\ldots,C_t)$: Takes the master public key, an admissible function $f:\mc M^t \rightarrow \mc M$, $t$ ciphertexts and returns a ciphertext $C$.  
$\Eval$ is composed of the following building blocks:
	\item $\Mult(C_1,C_2)$: Takes $C_i = (a_i,\beta_i)\in \mc M\times \hatC$ for $i=1,2$ and outputs \break$C = [[a_1\cdot a_2]]\hat\oplus (a_1\hat\otimes \beta_2) \hat\oplus (a_2\hat\otimes \beta_1) = [[m_1\cdot m_2 - \mf{b}_1\cdot \mf{b}_2]] =: \alpha \in \hatC$.
	\item $\Add(C_1,C_2)$: If $C_i = (a_i,\beta_i)\in\mc M \times \hatC$ for $i=1,2$, then outputs $C = (a_1+a_2,$ $\beta_1\hat\oplus \beta_2) =: (a,\beta) \in \mc M \times \hatC$. If both $C_i = \alpha_i \in \hatC$, for $i=1,2$, then outputs $C = \alpha_1\hat\oplus\alpha_2 =: \alpha \in \hatC$. If $C_1 = (a_1,\beta_1) \in\mc M \times \hatC$ and $C_2 = \alpha_2 \in \hatC$, then outputs $C = (a_1,\beta_1\hat\oplus \alpha_2) =: (a,\beta) \in \mc M\times \hatC$.
	\item $\cMult(c,C')$: Takes a plaintext $c\in\mc M$ and a ciphertext $C'$. If $C' = (a',\beta')\in\mc M\times \hatC$, outputs $C = (c \cdot a',c\hat\otimes \beta') =: (a,\beta)$ $\in \mc M\times \hatC$. If $C' = \alpha'\in\hatC$, outputs $C = c\hat\otimes \alpha' =: \alpha \in \hatC$.

\item[5.] $\D(\mr{msk,usk}_{1,\ldots,t},\mc P,C)$: Takes the master secret key, a vector of $t$ user secret keys, a labeled program $\mc P$ and a ciphertext $C$. It has an online and an offline part:
	\item $\text{Off-D}(\mr{msk},\mc P)$: Parses $\mc P$ as $(f,\tau_1,\ldots,\tau_t)$. For $i\in[t]$, it computes the secrets $\mf{b}_i = F(\mr{usk}_i,\tau_i)$, $\mf{b}=f(\mf{b}_1,\ldots,\mf{b}_t)$ and outputs $\mr{msk_{\mc P}(msk,\mf{b})}$.
	\item $\text{On-D}(\mr{sk_{\mc P}},C)$: If $C = (a,\beta)\in\mc M \times \hatC$: either output (i) $m = a+\mf{b}$ or (ii) output $m = a + \hatD (\mr{msk},\beta)$. If $C \in \hatC$, output $m = \hatD (\mr{msk},C)+\mf{b}$.
\end{itemize}

The cost of an online encryption is the cost of an addition in~$\mc M$. The cost of online decryption is independent of $\mc P$ and only depends on the complexity of $\hatD$. 

Semantic security characterizes the security of a cryptosystem. The definition of semantic security is sometimes given as a cryptographic game~\cite{GM82}. 

\begin{definition}\label{def:semantic}(Semantic Security~\cite[Ch.~5]{Goldreich04foundationsII})
An encryption scheme is \textbf{semantically secure} if for every probabilistic polynomial-time algorithm, $\mc A$, there exists a probabilistic polynomial-time algorithm $\mc A'$ such that for every two polynomially bounded functions $f,h:\{0,1\}^\ast\rightarrow \{0,1\}^\ast$, for any probability ensemble $\{X_\sigma\}_{\sigma\in\mathbb N}$ of length polynomial in~$\sigma$, for any positive polynomial $p$ and sufficiently large $\sigma$: 
\begin{align*}
\Pr \left[ \mc A(\E(X_\sigma),h(X_\sigma),1^\sigma) = f(X_\sigma)\right] < \Pr \left[ \mc A'(h(X_\sigma),1^\sigma) = f(X_\sigma)\right] + \frac{1}{p(\sigma)},
\end{align*}
The probability is taken over the ensemble $X_\sigma$ and the internal coin tosses in $\mc A,\mc A'$.
\end{definition}

Under the assumption of decisional composite residuosity~\cite{Paillier99}, the Paillier scheme is semantically secure and has indistinguishable encryptions. 
Moreover, the \LabHE scheme satisfies semantic security given that the underlying homomorphic encryption scheme is semantically secure and the function~$F$ is pseudorandom.

In~\cite{Barbosa17labeled} it is proved that \LabHE also satisfies \mbox{context-hiding} (decrypting the ciphertext does not reveal anything about the inputs of the computed function, only the result of the function on those inputs).

\subsection{Oblivious transfer}
\label{subsec:OT}
Oblivious transfer is a technique used when one party wants to obtain a secret from a set of secrets held by another party~\cite[Ch.~7]{Goldreich04foundationsII}. Party A has $k$ secrets $(\sigma_0,\ldots, \sigma_{k-1})$ and party B has an index $i\in\{0,\ldots,k-1\}$. The goal of A is to transmit the $i$-th secret requested by the receiver without knowing the value of the index~$i$, while B does not learn anything other than $\sigma_i$. This is called 1-out-of-$k$ oblivious transfer. There are many constructions of oblivious transfer that achieve security as in the two-party version of the simulation definition (Definition~\ref{def:coalition_view}). Many improvements in efficiency, e.g., precomputation, and security have been proposed, see e.g.,~\cite{Ishai2008founding,Nielsen2012new}.

We will use the standard 1-out-of-2 oblivious transfer, where the inputs of party A are $[[\sigma_0]],[[\sigma_1]]$ and party B holds $i\in\{0,1\}$ and the secret key and has to obtain $\sigma_i$. We will denote this by $\sigma_i\leftarrow \mr{OT}([[\sigma_0]],[[\sigma_1]],i,\hat{\mr{sk}})$. We will also use a variant where party A has to obliviously obtain the AHE-encrypted $[[\sigma_i]]$, and A has $[[\sigma_0]],[[\sigma_1]]$ and party B holds $i$, for $i\in\{0,1\}$, and the secret key. We will denote this variant by $[[\sigma_i]]\leftarrow \mr{OT'}([[\sigma_0]],[[\sigma_1]],i,\hat{\mr{sk}})$. 

The way the variant $\mr{OT}'$ works is that A chooses at random $r_0,r_1$ from the message space $\mc M$, and sends shares of the messages to B: $[[v_0]] := \hat\Add([[\sigma_0]], [[r_0]]), [[v_1]] :=\hat\Add([[\sigma_1]], [[r_1]])$. 
B selects $v_i$ and sends back to A the encryption of the index~$i$: $[[i]]$ and $\hat\Add([[v_i]],[[0]])$, such that A cannot obtain information about $i$ by comparing the value it received with the values it sent. Then, A computes:
\[
[[\sigma_i]] = \hat\Add\left([[v_i]], \hat\cMult(r_0,\hat\Add([[i]],[[-1]])),\hat\cMult(-r_1,[[i]])\right).
\]

\begin{proposition}\label{prop:OT}
$[[\sigma_i]]\leftarrow \mr{OT'}([[\sigma_0]],[[\sigma_1]],i,\hat{\mr{sk}})$ is private w.r.t. Definition~\ref{def:coalition_view}.
\end{proposition}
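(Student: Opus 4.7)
The plan is to first verify functional correctness and then construct simulators for the two non-trivial coalitions $I=\{A\}$ and $I=\{B\}$ in Definition~\ref{def:coalition_view}; the case $I=\{A,B\}$ is trivial. Correctness follows by direct computation on the plaintexts: for $i=0$ the decrypted value of the returned ciphertext is $(\sigma_0+r_0)+r_0\cdot(0-1)+(-r_1)\cdot 0 = \sigma_0$, and for $i=1$ it is $(\sigma_1+r_1)+r_0\cdot 0+(-r_1)\cdot 1 = \sigma_1$, so the protocol implements the functionality $f_A=[[\sigma_i]]$, $f_B=\bot$.

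For $S_A$, the inputs are $([[\sigma_0]],[[\sigma_1]])$ and the output is $[[\sigma_i]]$. Since $A$ does not hold $\hat{\mr{sk}}$, the two messages it receives from $B$, namely $[[i]]$ and $\hat\Add([[v_i]],[[0]])$, are ciphertexts whose plaintexts $A$ cannot read. The simulator samples $r_0',r_1'\leftarrow\mc M$ uniformly and outputs a view in which these two $B$-to-$A$ messages are replaced by fresh encryptions $[[0]]$. Indistinguishability is established by a two-step hybrid argument invoking semantic security of the \AHE scheme (Definition~\ref{def:semantic}): first replace $[[i]]$ by $[[0]]$, then replace $\hat\Add([[v_i]],[[0]])$ by $[[0]]$. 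Each step is indistinguishable because, absent $\hat{\mr{sk}}$, a semantically secure ciphertext of any plaintext is computationally indistinguishable from one of $0$.

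For $S_B$, the inputs are $(i,\hat{\mr{sk}})$ and the output is $\bot$. In the real protocol, $B$ receives only $[[\sigma_0+r_0]]$ and $[[\sigma_1+r_1]]$ with $r_0,r_1$ drawn independently and uniformly from $\mc M$; by the one-time pad property on the finite abelian group $(\mc M,+)$ described in Section~\ref{subsec:SS}, the underlying plaintexts are uniform on $\mc M$ and statistically independent of $\sigma_0,\sigma_1$. Accordingly, $S_B$ draws $v_0',v_1'\leftarrow\mc M$ uniformly and outputs the simulated view $((i,\hat{\mr{sk}}),\hatE(v_0'),\hatE(v_1'))$. If homomorphic addition yields a ciphertext identically distributed to a fresh encryption of the same plaintext (which, for Paillier, holds after a standard re-randomization step), the two distributions match exactly; otherwise semantic security closes any residual gap and yields computational indistinguishability.

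The main obstacle is the delicate handling of the distributional differences between a ciphertext produced by a homomorphic operation such as $\hat\Add(\cdot,[[0]])$ and one produced by a fresh encryption of the same plaintext. For $S_A$ this is absorbed into the semantic-security hybrid, since $A$ has no decryption capability and can only attempt to distinguish ciphertexts; for $S_B$ one must explicitly argue that the uniform random choice of $r_0,r_1$ in $\mc M$ renders the underlying plaintexts seen by $B$ uniformly distributed and independent of $(\sigma_0,\sigma_1)$, so that any residual statistical distance is confined to the ciphertext-level re-randomization and vanishes (either exactly or negligibly) under the standard assumptions on the \AHE scheme.
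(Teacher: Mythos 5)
Your correctness computation and your simulator $S_B$ match the paper: $S_B$ encrypts two fresh uniform values $\widetilde{v_0},\widetilde{v_1}$, and statistical indistinguishability follows from the one-time-pad property of the uniform masks $r_0,r_1$ on $\mc M$, exactly as in the paper's argument. Your remarks about re-randomization of homomorphically derived ciphertexts are also consistent with how the protocol is set up (B sends $\hat\Add([[v_i]],[[0]])$ precisely so that A cannot match it against what it sent).

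The gap is in $S_A$. The paper's view $V_A$ contains not only the two messages received from B but also A's output $[[\sigma_i]]$, and in the real execution that output is a \emph{deterministic ciphertext-level function} of $[[v_i]]$, $[[i]]$, $r_0$, $r_1$ and A's coins (which determine $[[-1]]$). A distinguisher can therefore recompute $\hat\Add\left([[v_i]], \hat\cMult(r_0,\hat\Add([[i]],[[-1]])),\hat\cMult(-r_1,[[i]])\right)$ from the view and check bit-for-bit equality with the listed output; this check requires no decryption and always passes on the real view. If, as you propose, the simulator replaces both incoming messages by fresh encryptions of $0$ while still outputting the true $[[\sigma_i]]$ it was handed, the check fails with overwhelming probability, giving a perfect distinguisher that semantic security cannot rule out. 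This is exactly why the paper's $S_A$ works backwards: it encrypts a random bit $\widetilde{i}$ and sets $[[\widetilde{v_i}]] = \hat\Add\big([[\sigma_i]], \hat\cMult(-\widetilde{r_0},\hat\Add([[\widetilde{i}]],[[-1]])),\hat\cMult(\widetilde{r_1},[[\widetilde{i}]])\big)$, so that A's final homomorphic computation applied to the simulated messages reproduces $[[\sigma_i]]$ exactly (the Paillier exponentiations cancel), and only then invokes semantic security to argue that $[[\widetilde{i}]]$ and $[[\widetilde{v_i}]]$ look like the real $[[i]]$ and $[[v_i]]$. Your hybrid argument would suffice if the view consisted of the transcript alone, but since the output here is itself a ciphertext that is part of the simulated object (and is consumed by the surrounding MPC protocol), the simulator must preserve this algebraic consistency; your construction does not.
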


We will show in the following proof how to construct the simulators in Definition~\ref{def:coalition_view} in order to prove the privacy of the oblivious transfer variant we use.

\begin{proof}
Let us construct the view of A, with inputs $\hat{\mr{pk}}, [[\sigma_0]],[[\sigma_1]]$ and output~$[[\sigma_i]]$:
\[V_A(\hat{\mr{pk}}, [[\sigma_0]],[[\sigma_1]]) = \left(\hat{\mr{pk}}, [[\sigma_0]],[[\sigma_1]], r_0, r_1, [[i]], [[v_i]], [[\sigma_i]], \mr{coins} \right),\]
where $\mr{coins}$ are the random values used for encrypting $r_0$ and $r_1$ and $[[-1]]$.
The view of party B, that has inputs $i,\hat{\mr{pk}},\hat{\mr{sk}}$ and no output, is:
\[V_B(i,\hat{\mr{pk}},\hat{\mr{sk}}) = \big(i,\hat{\mr{pk}},\hat{\mr{sk}}, [[v_0]], [[v_1]], \mr{coins}\big),\]
where $\mr{coins}$ are the random values used for encrypting $v_i$, $i$ and $0$.

Now let us construct a simulator $S_A$ that generates an indistinguishable view from party A. $S_A$ takes as inputs $\hat{\mr{pk}}, [[\sigma_0]], [[\sigma_1]], [[\sigma_i]]$ and generates $\widetilde{r_0}$ and $\widetilde{r_1}$ as random values in $\mc M$. It then selects a random bit $\widetilde{i}$ and encrypts it with $\hat{\mr{pk}}$ and computes $[[\widetilde{v_i}]] = \hat\Add\big([[\sigma_i]], \hat\cMult(-\widetilde{r_0},\hat\Add([[\widetilde{i}]],[[-1]])),\hat\cMult(\widetilde{r_1},[[\widetilde{i}]])\big)$. It also generates $\widetilde{\mr{coins}}$ for three encryptions. $S_A$ outputs:
\[S_A(\hat{\mr{pk}}, [[\sigma_0]],[[\sigma_1]],[[\sigma_i]]) = \big(\hat{\mr{pk}}, [[\sigma_0]],[[\sigma_1]], \widetilde{r_0}, \widetilde{r_1}, [[\widetilde{i}]], [[\widetilde{v_i}]], [[\sigma_i]], \widetilde{\mr{coins}} \big).\]

First, $\widetilde{r_0}, \widetilde{r_1}$ and $\widetilde{\mr{coins}}$ are statistically indistinguishable from $r_0, r_1$ and $ \mr{coins}$ because they were generated from the same distributions. Second, $[[\widetilde i]]$ and $[[\widetilde{v_i}]]$ are indistinguishable from $[[i]]$ and $[[v_i]]$ because \AHE is semantically secure and has indistinguishable encryptions, and because $[[\sigma_i]]$ is a refreshed value of $[[\sigma_0]]$ or $[[\sigma_1]]$. This means A cannot learn any information about $i$, hence $[[i]]$ looks like the encryption of a random bit, i.e., like $[[\widetilde{i}]]$. 
Thus, $V_A(\hat{\mr{pk}}, [[\sigma_0]],[[\sigma_1]]) \stackrel{c}{\equiv} S_A(\hat{\mr{pk}}, [[\sigma_0]],[[\sigma_1]],[[\sigma_i]])$.

A simulator $S_B$ for party B takes as inputs $i,\hat{\mr{pk}},\hat{\mr{sk}}$ and generates two random values from $\mc M$, names them $\widetilde{v_0}$ and $\widetilde{v_1}$ and encrypts them. It then generates $\widetilde{\mr{coins}}$ as random values for three encryptions. $S_B$ outputs:
\[S_B(i,\hat{\mr{pk}},\hat{\mr{sk}}) = \left(i,\hat{\mr{pk}},\hat{\mr{sk}}, [[\widetilde{v_0}]], [[\widetilde{v_1}]], \widetilde{\mr{coins}}\right).\]

First, $\widetilde{\mr{coins}}$ are statistically indistinguishable from $ \mr{coins}$ because they were generated from the same distribution. Second, $\widetilde{v_0}$ and $\widetilde{v_1}$ are also statistically indistinguishable from each other and from $v_0$ and $v_1$ due to the security of the one-time pad. Their encryptions will also be indistinguishable. Thus, $V_B(i,\hat{\mr{pk}},\hat{\mr{sk}}) \stackrel{c}{\equiv} S_B(i,\hat{\mr{pk}},\hat{\mr{sk}})$.
\end{proof}

\subsection{Private comparison}
\label{subsec:comparison}
Consider a \mbox{two-party} computation problem of two inputs encrypted under an encryption scheme that does not preserve the order from plaintexts to ciphertexts. A survey on the state of the art of private comparison protocols on private inputs owned by two parties is given in~\cite{Couteau16}. 
In~\cite{DGK07,DGK09correction}, Damg{\aa}rd, Geisler and Kr{\o}igaard describe a protocol for secure comparison and, towards that functionality, they propose the DGK additively homomorphic encryption scheme with the property that, given a ciphertext and the secret key, it is efficient to determine if the encrypted value is zero without fully decrypting it. This is useful when making decisions on bits. The authors also prove the semantic security of the DGK cryptosystem under the hardness of factoring assumption. We denote the DGK encryption of a scalar value by $[\cdot]$ and use the same operations notation $\hat\oplus$, $\hat\ominus$ and $\hat\otimes$ to abstract the addition and subtraction between and encrypted values, respectively, the multiplication between a plaintext and an encrypted values. 

Consider two parties A and B, each having a private value $\alpha$ and $\beta$. Using the binary representations of $\alpha$ and $\beta$, the two parties exchange $l$ blinded and encrypted values such that each of the parties will obtain a bit $\delta_A\in\{0,1\}$ and $\delta_B\in\{0,1\}$ that satisfy the following relation: $\delta_A\veebar\delta_{B} = (\alpha\leq \beta)$, after executing Protocol~\ref{prot:plain_DGKV}, where $\veebar$ denotes the exclusive or operation. The protocol is described in~\cite[Protocol 3]{Veugen12}, where an improvement of the DGK scheme is proposed. By applying some extra steps, as in~\cite[Protocol 2]{Veugen12}, one can obtain a protocol for private \mbox{two-party} comparison where party A has two encrypted inputs with an \AHE scheme $[[a]],[[b]]$, with $a,b$ represented on $l$ bits, described in Protocol~\ref{prot:encr_DGKV}.

\begin{protocol}[Private \mbox{two-party} comparison with encrypted inputs using DGK] 
\label{prot:encr_DGKV}
\small
\begin{algorithmic}[1]
 \Require{A: $[[a]],[[b]]$; B: $\hat{\mr{sk}}, sk_{DGK}$}
 \Ensure{B: $(\delta=1)\equiv(a\leq b)$}
\State A: choose uniformly at random $r$ of $l+1+\lambda$ bits, compute $[[z]] \leftarrow [[b]]\hat\ominus[[a]] \hat\oplus [[2^l + r]]$ and send it to B. Then compute $\alpha \leftarrow r\bmod 2^l$.
\State B: decrypt $[[z]]$ and compute $\beta \leftarrow z\bmod 2^l$.
\State A,B: execute Protocol~\ref{prot:plain_DGKV}.
\State B: send $[[z\div 2^l]]$ and $[[\delta_B]]$ to A.
\State A: $[[(\beta < \alpha)]]\leftarrow [[\delta_B]]$ if $\delta_A = 1$ and $[[(\beta < \alpha)]]\leftarrow [[1]]\hat\ominus [[\delta_B]]$ otherwise.
\State A: compute $[[\delta]] \leftarrow [[z\div 2^l]] \hat\ominus ([[r\div 2^l]]\hat\oplus [[(\beta < \alpha)]])$ and send it to B.
\State B: decrypts $\delta$.
\end{algorithmic}
\end{protocol}

\begin{protocol}[Private \mbox{two-party} comparison with plaintext inputs using DGK] 
\label{prot:plain_DGKV}
\small
\begin{algorithmic}[1]
 \Require{A: $\alpha$; B: $\beta,sk_{DGK}$}
 \Ensure{A: $ \delta_A$; B: $\delta_B$ such that $\delta_A\veebar\delta_B = (\alpha\leq\beta)$}
\State B: send the encrypted bits $[\beta_i]$, $0\leq i <l$ to A.
\For{each $0\leq i <l$}
\State A: $[\alpha_i\veebar \beta_i]\leftarrow [\beta_i]$ if $\alpha_i = 0$ and $[\alpha_i\veebar \beta_i]\leftarrow[1]\hat\oplus(-1)\hat\otimes[\beta_i]$ otherwise.
\EndFor
\State A: Choose a uniformly random bit $\delta_A\in\{0,1\}$.
\State A: Compute the set $\mc L = \{i| 0\leq i <l \text{ and } \alpha_i = \delta_A \}$.
\For{each $i\in\mc L$}
\State A: compute $[c_i]\leftarrow [\alpha_{i+1}\veebar\beta_{i+1}] \hat\oplus \ldots \hat\oplus [\alpha_{l}\veebar\beta_{l}])$ .
\State A: $[c_i]\leftarrow [1]\hat\oplus[c_i]\oplus(-1)\hat\otimes[\beta_i]$ if $\delta_A = 0$ and $[c_i]\leftarrow  [1]\hat\oplus[c_i]$ otherwise.
\EndFor
\State A: generate uniformly random non-zero values $r_i$ of $2t$ bits (see~\cite{DGK09correction}), $0\leq i <l$.
\For{each $0\leq i <l$}
\State A: $[c_i]\leftarrow r_i\hat\otimes[c_i] $ if $ i\in\mc L $ and $ [c_i]\leftarrow [r_i] $ otherwise.
\EndFor
\State A: send the values $[c_i]$ in random order to B.
\State B: if at least one of the values $c_i$ is decrypted to zero, set $\delta_B \leftarrow 1$, otherwise set $\delta_B\leftarrow 0$.
\end{algorithmic}
\end{protocol}

\begin{proposition}\label{prop:DGK}(~\cite{DGK07,Veugen12})
	Protocol~\ref{prot:encr_DGKV} is private w.r.t. Definition~\ref{def:coalition_view}.
\end{proposition}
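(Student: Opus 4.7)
The plan is to exhibit probabilistic polynomial-time simulators $S_A$ and $S_B$ for the two parties and argue that each simulated view is computationally indistinguishable from the corresponding real view, as required by Definition~\ref{def:coalition_view}. Since Protocol~\ref{prot:encr_DGKV} invokes Protocol~\ref{prot:plain_DGKV} as a subroutine, I will proceed by modular composition: I take for granted the privacy of Protocol~\ref{prot:plain_DGKV} (this is the content of \cite{DGK07,Veugen12}), so the sub-transcript can be replaced by the output of a subprotocol simulator $S^{(2)}_A, S^{(2)}_B$ that uses only the respective party's input and output of Protocol~\ref{prot:plain_DGKV}. It then suffices to simulate the extra messages appearing in Protocol~\ref{prot:encr_DGKV}, namely the ciphertext $[[z]]$ sent from A to B in step~1, the ciphertexts $[[z\div 2^l]]$ and $[[\delta_B]]$ sent from B to A in step~4, and the ciphertext $[[\delta]]$ sent from A to B in step~6.

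For party B (with input $\hat{\mr{sk}}, sk_{DGK}$ and output $\delta$), the simulator $S_B$ draws $\tilde z$ uniformly at random of $l+1+\lambda$ bits and forms $[[\tilde z]]$; since the honest $z = b-a+2^l+r$ differs from a uniform random integer on $l+1+\lambda$ bits only through the additive shift $b-a+2^l$, which lives in a set of size at most $2^{l+1}$, the statistical distance between $\tilde z$ and the real $z$ is bounded by $2^{-\lambda}$. $S_B$ then sets $\tilde\beta\leftarrow\tilde z\bmod 2^l$, picks a uniform bit $\tilde\delta_B$, runs $S^{(2)}_B$ on input $(\tilde\beta, sk_{DGK}, \tilde\delta_B)$, and appends an encryption of $\tilde z\div 2^l$, an encryption of $\tilde\delta_B$, and finally an honest encryption of $\delta$. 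Statistical indistinguishability of $[[\tilde z]]$ from $[[z]]$, semantic security of \AHE for the remaining simulated ciphertexts, and the assumed privacy of Protocol~\ref{prot:plain_DGKV} together give $V^{\Pi}_B \stackrel{c}{\equiv} S_B$.

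For party A (with input $[[a]],[[b]]$ and no functionality output), the simulator $S_A$ draws $\tilde r$ of $l+1+\lambda$ bits exactly as the honest A would, computes $\tilde\alpha\leftarrow\tilde r\bmod 2^l$, and forms $[[\tilde z]]$ by homomorphic evaluation on $[[a]],[[b]]$ with $\tilde r$, so these entries are distributed identically to the real view. It then picks a uniform bit $\tilde\delta_A$, invokes $S^{(2)}_A$ on input $(\tilde\alpha,\tilde\delta_A)$ to obtain the inner transcript, and finally produces fresh encryptions of arbitrary values in $\mc M$ (e.g., zero) to stand in for $[[z\div 2^l]]$ and $[[\delta_B]]$. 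These stand-ins are indistinguishable from the real ciphertexts by the semantic security of \AHE, exactly as in the proof of Proposition~\ref{prop:OT}, because A never decrypts them and uses them only through the homomorphic operations producing $[[\delta]]$.

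The main obstacle I expect is the justification of the simulation of $[[z]]$ on B's side, since B does decrypt $z$ and so cannot be argued about via semantic security alone. The handle is the statistical blinding provided by the $l+1+\lambda$-bit random mask $r$, which is precisely why the protocol chooses this bit-length; the security parameter $\lambda$ controls the statistical distance and hence the quality of the simulation. A second subtlety, which must be treated carefully but is standard, is that the bit $\delta_A$ produced inside Protocol~\ref{prot:plain_DGKV} is uniform by construction (step~5 there), so drawing $\tilde\delta_A$ uniformly in the simulator is legitimate and lets the modular composition go through without leakage between the outer and inner transcripts.
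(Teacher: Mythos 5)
The paper never actually proves Proposition~\ref{prop:DGK}; it defers entirely to~\cite{DGK07,Veugen12}, so there is no in-paper argument to match against. Your simulation-based proof is the correct one, and it is consistent both with the cited works and with the style the paper uses for its only worked-out simulator proof, Proposition~\ref{prop:OT}: party A, lacking $\hat{\mr{sk}}$ and $sk_{DGK}$, receives only ciphertexts, which fresh encryptions of arbitrary plaintexts simulate under semantic security; party B, who can decrypt, sees $z=b-a+2^l+r$, and the $(l+1+\lambda)$-bit mask $r$ makes $z$ statistically $2^{-\lambda}$-close to uniform --- exactly the statistical-blinding property the paper sets up in Section~\ref{subsec:SS}. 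You also correctly isolate the two delicate points: B's side must be argued statistically rather than via semantic security, and $\tilde\delta_B$ may be drawn uniformly because $\delta_A$ is a uniform coin of A that, by the assumed privacy of Protocol~\ref{prot:plain_DGKV}, is hidden from B's view of the inner transcript, so $\delta_B=\delta_A\veebar(\alpha\leq\beta)$ is uniform and independent of the output $\delta$. Two details deserve an explicit sentence in a polished version but are not gaps: (i) the final message $[[\delta]]$ that B receives and decrypts is a homomorphically evaluated ciphertext rather than a fresh one, so its indistinguishability from your simulator's fresh encryption of the known output $\delta$ rests on the re-randomization/circuit-privacy property the paper explicitly requires of the \AHE scheme; (ii) the joint distribution of $(\tilde z,\tilde\delta_B,\delta)$ produced by $S_B$ must be checked to match the real correlated triple, which it does precisely because of the two independence facts just mentioned.
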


\section{MPC with encrypted model and encrypted signals}\label{sec:private_MPC}
As remarked in the Introduction, in many situations it is important to protect not only the signals (e.g., the states, measurements), but also the system model. To this end, we propose a solution that uses Labeled Homomorphic Encryption to achieve encrypted multiplications and the private execution of MPC on encrypted data. $\mr{LabHE}$ has a useful property called unbalanced efficiency that can be observed from Section~\ref{subsubsec:LabHE} and was described in~\cite{Catalano2015using}, which states that only the evaluator is required to perform operations on ciphertexts, while the decryptor performs computations only on the plaintext messages. 
We will employ this property by having the cloud perform the more complex operations and the actuator the more efficient ones.

Our protocols will consist of three phases: the offline phase, in which the computations that are independent from the specific data of the users are performed, the initialization phase, in which the computations related to the constant parameters in the problem are performed, and the online phase, in which computations on the variables of the problem are performed. The initialization phase can be offline, if the constant parameters are a priori known, or online otherwise.

Figure~\ref{fig:encrypted_MPC} represents the private version of the MPC diagram from Figure~\ref{fig:MPC}, where the quantities will be briefly described next and more in detail in Section~\ref{subsec:protocol}. The actuator holds the MPC query functionality, denoted by $f_{\mr{MPC}}$. 
Offline, the actuator generates a pair of master keys, as described in Section~\ref{subsubsec:LabHE} and publishes the master public key. The setup and subsystems generate their secret keys and send the corresponding public keys to the actuator. Still offline, these parties generate the labels corresponding to their data with respect to the time stamp and the size of the data. As explained in Section~\ref{subsubsec:LabHE}, the labels are crucial to achieving the encrypted multiplications. Moreover, when generating them, it is important to make sure that no two labels that will be encrypted with the same key are the same. When the private data are times series, as in our problem, the labels can be easily generated using the time steps and sizes corresponding to each signal, with no other complex synchronization process necessary between the actors. This is shown in Protocol~\ref{prot:prior_MPC}. 

The setup entity sends the $\mr{LabHE}$ encryptions of the state matrices and costs to the cloud controller  before the execution begins. The subsystems send the encryptions of the input constraints to the cloud controller, also before the execution begins. Online, at every time step, the subsystems encrypt their measurements and send them to the cloud. After the cloud performs the encrypted MPC query for one time step, it sends the encrypted control input at the current time step to the actuator, which decrypts it and inputs it to the system. In Protocol~\ref{prot:iter_MPC}, we describe how the encrypted MPC query is performed by the parties, which involves the actuator sending an encryption of the processed result back to the cloud such that the computation can continue in the future time steps.

\vspace{-\topsep}
\begin{figure}[ht!]
  \centering
    \includegraphics[width=0.73\textwidth]{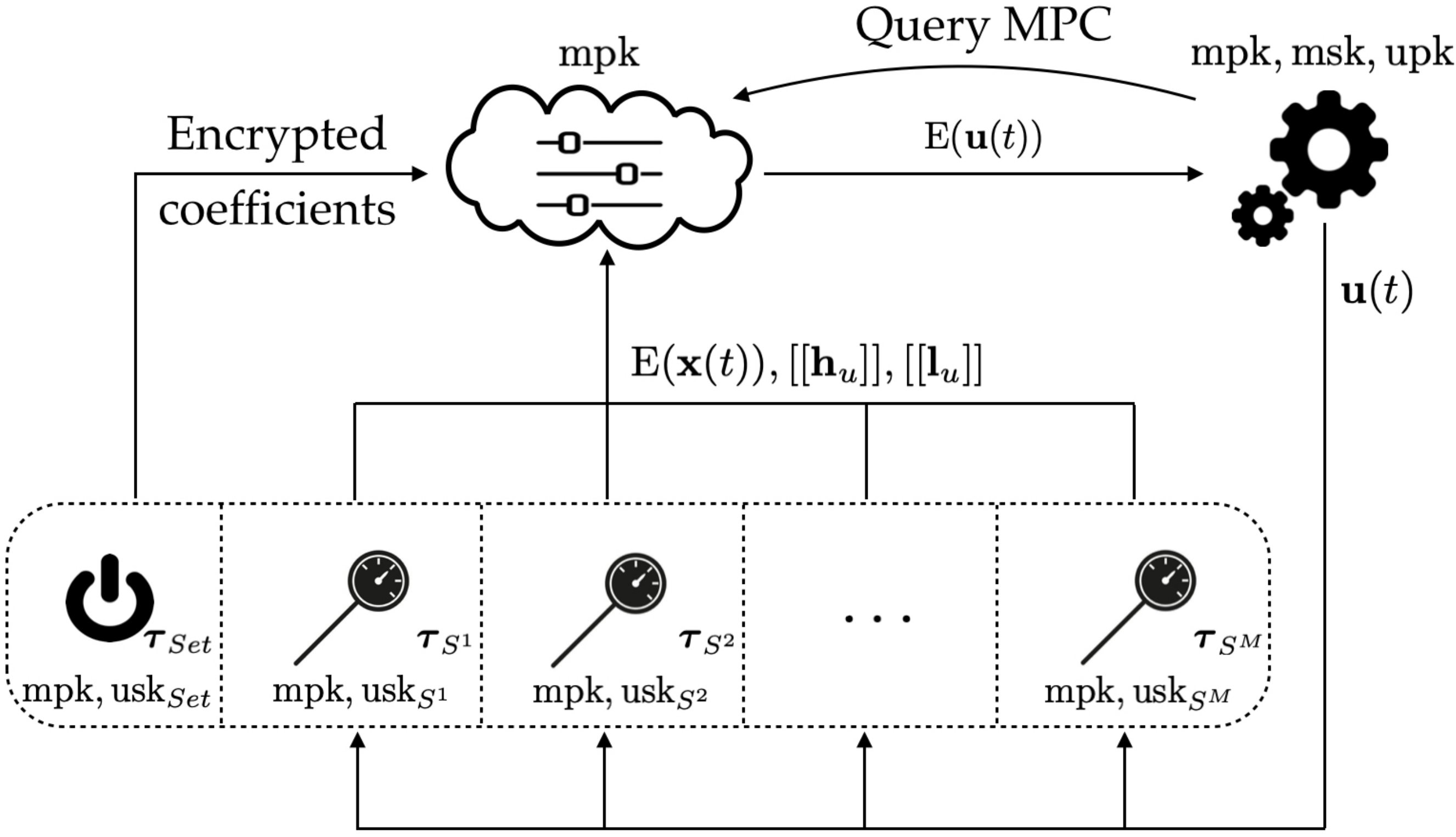}
  \caption{
  	The setup and subsystems send their encrypted data to the cloud. The cloud has to run the MPC algorithm on the private measurements and the system's private matrices and send the encrypted result to the actuator. The latter then actuates the system with the decrypted inputs.}
   \label{fig:encrypted_MPC}
\end{figure}

We will now show how to transform the FGM in Equation~\eqref{eq:FGM} into a private version, using Labeled Homomorphic Encryption and private two-party comparison. 
The message space $\mc M$ of the encryption schemes we use is $\mbb Z_N$, the additive group of integers modulo a large value $N$. This means that, prior to encryption, the values have to be represented as integers. For now, assume that this preprocessing step has been already performed. We postpone the details to Section~\ref{subsec:error_analysis}.

First, let us write $\vec{t}_k$ in~\eqref{eq:iter_t} as a function of $\vec{U}_{k}$ and $\vec{U}_{k-1}$:
\vspace{-0.5\topsep}
\begin{align*}
\vec{t}_k &= \left(\vec{I}_{Mm} - \frac{1}{L}\vec{H}\right)z_k - \frac{1}{L}\vec{F}^\intercal \vec{x}(t)= \left(\vec{I}_{Mm} - \frac{1}{L}\vec{H}\right) \left[(1+\eta)\vec{U}_{k} - \eta \vec{U}_{k-1} \right] - \frac{1}{L}\vec{F}^\intercal \vec{x}(t)\\
      &= \vec{U}_k + \eta(\vec{U}_k - \vec{U}_{k-1}) - \frac{1}{L}\vec{H} \vec{U}_k - \frac{\eta}{L}\vec{H} (\vec{U}_k - \vec{U}_{k-1}) - \frac{1}{L}\vec{F}^\intercal \vec{x}(t).
\end{align*}
If we consider the composite variables $\frac{1}{L}\vec{H}$, $\frac{\eta}{L}\vec{H}$, $\frac{1}{L}\vec{F}$ and variables $\vec{U}_k, \vec{U}_{k-1}, \vec{x}(t)$, then $\vec{t}_k$ can be written as a degree-two multivariate polynomial. This allows us to compute $[[\vec{t}_k]]$ using $\mr{LabHE}$. 

Then, one encrypted iteration of the FGM, where we assume that the cloud has access to $\E\left(-\frac{1}{L}\vec{H}\right)$, $\E\left(-\frac{\eta}{L}\vec{H}\right)$, $\E\left(\frac{1}{L}\vec{F}^\intercal\right)$, $\E(\vec{x}(t))$, $[[\vec{h}_u]], [[\vec{l}_u]]$ can be written as follows. Denote the computation on the inputs mentioned previously as $f_\mr{iter}$. We use both $\Add$ and $\oplus,\ominus$, $\Mult$ and $\otimes$ for a better visual representation.
\vspace{-0.5\topsep}
\begin{align*}
	[[\vec{t}_k - \boldsymbol \rho_k]] &= \Mult \left ( \E\left(\frac{-1}{L}\vec{F}^\intercal \right),\E(\vec{x}(t)) \right) \oplus \Mult \left(\Add\left(\vec{I}_{Mm},\E\left( \frac{1}{L}\vec{H}\right)\right),\E\left(\vec{U}_k \right) \right) \oplus \\
	&\oplus \Mult \left(\Add\left(\E(\eta)\otimes \vec{I}_{Mm}, \E\left(\frac{-\eta}{L}\vec{H} \right)\right),\big(\E(\vec{U}_k) \ominus \E(\vec{U}_{k-1})\big) \right), \numberthis\label{eq:encrypted_iter}
\end{align*}
where $\boldsymbol\rho_k$ is the secret obtained by applying $f_\mr{iter}$ on the \LabHE secrets of the inputs of $f_\mr{iter}$. When the actuator applies the \LabHE decryption primitive on $[[\vec{t}_k - \boldsymbol \rho_k]]$, $\boldsymbol\rho_k$ is removed. Hence, for simplicity, we will write $[[\vec{t}_k]]$ instead of $[[\vec{t}_k - \boldsymbol \rho_k]]$.

Second, let us address how to perform~\eqref{eq:iter_U} in a private way. We have to perform the projection of $\vec{t}_k$ over the feasible domain described by $\vec{h}_u$ and $\vec{l}_u$, where all the variables are encrypted, as well as the private update of $\vec{U}_{k+1}$ with the projected iterate. One solution to the private comparison was described in Section~\ref{subsec:comparison}. The cloud has $[[\vec{t}_k]]$ and assume it also has the \AHE encryptions of the limits $[[\vec{h}_u]]$ and $[[\vec{l}_u]]$. 
The actuator has the master key of the \LabHE, as well as $\boldsymbol\rho_k$, that it computed offline. 
The cloud and the actuator will engage in two instances of the $\mr{DGK}$ protocol and oblivious transfer: first, to compare $\vec{t}_k$ to $\vec{h}_u$ and obtain an intermediate value of $\vec{U}_{k+1}$, and second, to compare $\vec{U}_{k+1}$ to $\vec{l}_u$ and to update the iterate $\vec{U}_{k+1}$. 

Before calling the comparison protocol~\ref{prot:encr_DGKV}, described in Section~\ref{subsec:comparison}, between $[[\vec{h}_u]]$ and $[[\vec{t}_k]]$, as well as between $[[\vec{l}_u]]$ and $[[\vec{U}_{k+1}]]$, the cloud should randomize the order of the inputs, such that, after obtaining the comparison bit, the actuator does not learn whether the iterate was feasible or not. This is done in lines 8 and 11 in Protocol~\ref{prot:iter_MPC}. 
Upon the completion of the comparison, the cloud and actuator perform the oblivious transfer variant, described in Section~\ref{subsec:OT}, such that the cloud obtains the intermediate value of $[[\vec{U}_{k+1}]]$ and subsequently, update the \AHE encryption of the iterate $[[\vec{U}_{k+1}]]$. At the last iteration, the cloud and actuator perform the standard oblivious transfer for the first $m$ positions in the values such that the actuator obtains $\vec{u}(t)$. 
Finally, because the next iteration can proceed only if the cloud has access to the full \LabHE encryption $\E(\vec{U}_{k+1})$, instead of $[[\vec{U}_{k+1}]]$, the cloud and actuator have to refresh the encryption. Specifically, the cloud secret-shares $[[\vec{U}_{k+1}]]$ in $[[\vec{U}_{k+1} - \vec{r}_{k+1}]]$ and $\vec{r}_{k+1}$, and sends $[[\vec{U}_{k+1} - \vec{r}_{k+1}]]$ to the actuator. The actuator decrypts it, and, using a previously generated secret, sends back $\E(\vec{U}_{k+1} - \vec{r}_{k+1}) = \big(\vec{U}_{k+1} - \vec{r}_{k+1}, \big[\big[\boldsymbol{\mf{b}}^U_{k+1}\big]\big]\big)$. Then, the cloud recovers the \LabHE encryption as $\E(\vec{U}_{k+1} ) = \Add(\E(\vec{U}_{k+1} - \vec{r}_{k+1}), \vec{r}_{k+1})$. 
In what follows, we will outline the private protocols obtained by integrating the above observations.

\vspace{-2\topsep}
\subsection{Private protocol}\label{subsec:protocol}
Assume that $K, N$ are fixed and known by all parties. Subscript $S^i$ stands for the $i$-th subsystem, for $i\in[M]$, subscript $Set$ for the Setup, subscript $A$ for the actuator and subscript $C$ for the cloud. 

\vspace{-0.5\topsep}
\begin{protocol}[Initialization of encrypted MPC]\label{prot:prior_MPC}
\small
\begin{algorithmic}[1]
\Require{Actuator: $f_{\mr{MPC}}$; Subsystems: $\mc U^i$; Setup: $\vec{A}, \vec{B}, \vec{P}, \vec{Q}, \vec{R}$.}
\Ensure{Subsytems: $\mr{mpk}, \mr{upk}_{S^i},\mr{usk}_{S^i},\boldsymbol{\tau}_{S^i}, \boldsymbol{\mf b}_{S^i}, [[\boldsymbol{\mf b}_{S^i}]], \mf{R}_{S^i}$; Setup: $\vec{H}, \vec{F}, \eta, L,$ $\mr{mpk},$ $\mr{upk}_{Set},$ $\mr{usk}_{Set},\boldsymbol{\tau}_{Set}, \boldsymbol{\mf b}_{Set}, [[\boldsymbol{\mf b}_{Set}]], \mf{R}_{Set}$; Actuator: $\mr{usk},\mr{upk}, \boldsymbol{\tau}_{A}, \boldsymbol{\mf b}_{A}, [[\boldsymbol{\mf b}_{A}]], \mf{R}_A$; Cloud: $\mr{mpk},$ \break$\E\left(-\frac{1}{L}\vec{H}\right),\E\left(-\frac{\eta}{L}\vec{H}\right)$, $\E\left(\frac{1}{L}\vec{F}^\intercal\right)$, $\E(\eta)$, $[[\vec{h}_u]]$, $[[\vec{l}_u]]$, $\mf{R}_C$.}
\Statex \underline{Offline}:
\State Actuator: Generate $(\mr{mpk},\mr{msk})\leftarrow \Init(1^\sigma)$ and distribute $\mr{mpk}$ to the others. Also generate a key $\mr{usk}$ for itself.
\State Subsystems, Setup: Each get $(\mr{usk},\mr{upk})\leftarrow \KeyGen(\mr{mpk})$ and send $\mr{upk}$ to the actuator.
\State Subsystems, Setup, Actuator: Allocate labels to the inputs of function $f_{\mr{MPC}}$: $\tau_1,\ldots,\tau_v$, where $v$ is the total number of inputs, as follows: 
	\Statex Subsystem $i$: for $\vec{x}^i(t)$ of size $n^i$, where $i$ denotes a subsystem, generate the corresponding labels $\tau_{\vec{x}^i(t)} = [\begin{matrix} 0 & 1 & n^i \ldots & n^i-1 \end{matrix}]^\intercal$.
	\Statex Setup: for matrix $\vec{H}\in\mathbb R^{Mm\times Mm}$, set $l = 0$, generate \break $\tau_{\vec{H}} = \left[\begin{smallmatrix} l & l+1 & \ldots & l+Mm-1 \\ \vdots & & & \vdots \\ l+(Mm-1)Mm & l+(Mm-1)Mm+1 & \ldots & l+ M^2m^2 -1  \end{smallmatrix}\right]$ and update $l = M^2m^2$, then follow the same steps for $\vec{F}$, starting from $l$ and updating it.
	\Statex Actuator: follow the same steps as the subsystems and setup, and then generate similar labels for the iterates $\vec{U}_k$ starting from the last $l$, for $k=0,\ldots,K-1$.
\State Subsystems, Setup, Actuator, Cloud: Generate randomness for blinding and encryptions $\mf{R}$. 
\State Subsystems, Setup, Actuator: Perform the offline part of the $\mr{LabHE}$ encryption primitive. The actuator also performs the offline part for the decryption. The parties thus obtain $\boldsymbol{\mf b}, [[\boldsymbol{\mf b}]]$. 
\State Actuator: Generate initializations for the initial iterate $\vec{U}'_0$.
\State Actuator: Form the program $\mc P=(f_{\mr{MPC}},\tau_1,\ldots,\tau_v)$. 
\Statex \underline{Initialization}:
\State Setup: Compute $\vec{H}$ and $\vec{F}$ from $\vec{A}, \vec{B}, \vec{P}, \vec{Q}, \vec{R}$ and then $L = \lambda_{\mr{max}}(\vec{H})$ and $\eta =$ \break$ (\sqrt{\kappa(\vec{H})}-1)/(\sqrt{\kappa(\vec{H})}+1)$. Perform the online part of $\mr{LabHE}$ encryption and send to the cloud: $\E\left(-\frac{1}{L}\vec{H}\right)$, $\E\left(-\frac{\eta}{L}\vec{H}\right)$, $\E\left(\frac{1}{L}\vec{F}^\intercal\right)$, $\E(\eta)$.
\State Subsystems: Perform the online part of $\mr{LabHE}$ encryption and send to the cloud, which aggregates what it receives into: $[[\vec{h}_u]], [[\vec{l}_u]]$.
\vspace{-1pt}
\end{algorithmic}
\end{protocol}

\begin{protocol}[Encrypted MPC step]\label{prot:iter_MPC}
\small
\begin{algorithmic}[1]
\Require{Actuator: $f_{\mr{MPC}}$; Subsystems: $\vec{x}^i(t)$, $\mc U^i$; Setup: $\vec{A}, \vec{B}, \vec{P}, \vec{Q}, \vec{R}$.}
\Ensure{Actuator: $\vec{u}(t)$}
\Statex \underline{Offline + Initialization}:
\State Subsystems, Setup, Cloud, Actuator: Run Protocol~\ref{prot:prior_MPC}.
\Statex \underline{Online}:
	\State Cloud: $\left[\left[\frac{1}{L}\vec{F}^\intercal \vec{x}(t)\right]\right] \leftarrow\Mult \left(\E\left(\frac{1}{L}\vec{F}^\intercal\right), \E(\vec{x}(t))\right)$.
		\State Actuator: Send the initial iterate to the cloud: $\E(\vec{U}'_0)$.
		\State Cloud: Change the initial iterate: $\E(\vec{U}_0) = \Add\left(\E\left(\vec{U}'_0\right), \vec{r}_0\right)$. 
	\State Cloud: $\E \left(\vec{U}_{-1}\right)\leftarrow \E\left(\vec{U}_0\right)$.
	\For{$k=0,\ldots,K-1$}
		\State Cloud: Compute $[[\vec{t}_k]]$ as in Equation~\eqref{eq:encrypted_iter}.
		\State Cloud: $\left([[\vec{a}_k]], [[\vec{b}_k]]\right) \leftarrow \mr{randomize}\left([[\vec{h}_u]],[[\vec{t}_k]]\right)$.
		\State Cloud, Actuator: Execute comparison protocol~\ref{prot:encr_DGKV}; Actuator obtains $\boldsymbol \delta_k$.
		\State Cloud, Actuator: $\left[\left[\vec{U}_{k+1}\right]\right]\leftarrow \mr{OT'}\left([[\vec{a}_k]], [[\vec{b}_k]],\boldsymbol \delta_k,\mr{msk} \right)$.
		\State Cloud: $\left([[\vec{a}_k]], [[\vec{b}_k]]\right) \leftarrow \mr{randomize}\left([[\vec{l}_u]],\left[\left[ \vec{U}_{k+1} \right]\right]\right)$.
		\State Cloud, Actuator: Execute comparison protocol~\ref{prot:encr_DGKV}; Actuator obtains $\boldsymbol \delta_k$.
		\If{k!=K-1}
			\State Cloud, Actuator: $\left[\left[\vec{U}_{k+1}\right]\right]\leftarrow \mr{OT'}\left([[\vec{a}_k]], [[\vec{b}_k]],\boldsymbol \delta_k \veebar \vec{1},\mr{msk}\right)$. Cloud receives $\left[\left[\vec{U}_{k+1}\right]\right]$.
			\State Cloud: Send to the actuator $\left[\left[\vec{U}'_{k+1}\right]\right]\leftarrow\Add\left(\left[\left[ \vec{U}_{k+1} \right]\right],\left[\left[-\vec{r}_k\right]\right]\right)$, where $\vec{r}_k$ is 
			\Statex\hskip\algorithmicindent\hskip\algorithmicindent randomly selected from $\mc M^{Mm}$.
			\State Actuator: Decrypt $\left[\left[\vec{U}'_{k+1}\right]\right]$ and send back $\E(\vec{U}'_{k+1})$.
			\State Cloud: $\E\left(\vec{U}_{k+1}\right) \leftarrow \Add\left(\E\left(\vec{U}'_{k+1}\right), \vec{r}_k\right)$.
		\Else
			\State Cloud, Actuator: $\vec{u}(t)\leftarrow \mr{OT}\left([[\vec{a}_k]]_{1:m}, [[\vec{b}_k]]_{1:m},\{\boldsymbol \delta_k\}_{1:m} \veebar \vec{1},\mr{msk}\right)$. Actuator 
			\Statex\hskip\algorithmicindent\hskip\algorithmicindent receives $\vec{u}(t)$.
		\EndIf
	\EndFor
	\State Actuator: Input $\vec{u}(t)$ to the system.
\end{algorithmic}
\end{protocol}
Lines 3 and 4 ensure that neither the cloud nor the actuator knows the initial point of the optimization problem.

\subsection{Privacy of Protocol~\ref{prot:iter_MPC}}\label{sec:proof}
\begin{assumption}\label{assum:coalitions}\Assumption
An adversary cannot corrupt at the same time both the cloud controller and the virtual actuator or more than $M-1$ subsystems.

\end{assumption}

\begin{theorem}\label{thm:privacy}
Under Assumption~\ref{assum:coalitions}, the encrypted MPC solution presented in Protocol~\ref{prot:iter_MPC} achieves multi-party privacy (Definition~\ref{def:coalition_view}).
\end{theorem}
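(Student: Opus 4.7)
The plan is to establish Definition~\ref{def:coalition_view} by constructing, for every coalition $I$ compatible with Assumption~\ref{assum:coalitions}, a polynomial-time simulator $S_I$ whose output on the coalition's inputs and outputs is computationally indistinguishable from the real view $V_I^\Pi$. Since Assumption~\ref{assum:coalitions} rules out $\{C,A\}$ and any coalition comprising all $M$ subsystems, a standard closure argument reduces the analysis to atomic simulators for a strict subset of subsystems, the setup entity, the cloud (possibly joined by honest-but-insufficient subsets of the other two), and the actuator (likewise). For the first two cases, the coalition only emits ciphertexts in the online phase, so $S_I$ re-runs the honest $\mathrm{LabHE}$ encryption on the inputs it already possesses with fresh randomness, and no further argument is needed.

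For the cloud $C$, whose view consists of $\mathrm{mpk}$, the ciphertexts received during Protocol~\ref{prot:prior_MPC}, the transcripts of the $K$ rounds of DGK comparisons, $\mathrm{OT}'$ calls, and refresh exchanges, and whose only ``output'' is the relayed ciphertext at the terminal iteration, $S_C$ replaces (a) every ciphertext received from the setup or the subsystems by a $\mathrm{LabHE}$ encryption of $0$ under $\mathrm{mpk}$; (b) the refreshed ciphertext $\E(\vec U'_{k+1})$ returned by the actuator at each iteration by an encryption of a uniform element of $\mathcal{M}^{Mm}$; and (c) the actuator's side of DGK and $\mathrm{OT}'$ by invoking the sub-protocol simulators guaranteed by Propositions~\ref{prop:DGK} and~\ref{prop:OT}. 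Indistinguishability of (a) and (b) reduces to the semantic security of $\mathrm{LabHE}$ and the underlying $\mathrm{AHE}$; indistinguishability of (c) is direct composition with the already-proved privacy of the sub-protocols.

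The actuator's view is the delicate case. Its inputs are the master keys, the labels, and $f_{\mathrm{MPC}}$, and its only plaintext output is $\vec u(t)$. The pivotal observation is that in every iteration $k<K-1$ the ciphertext $[[\vec U'_{k+1}]]=[[\vec U_{k+1}-\vec r_k]]$ that $A$ decrypts is blinded by a uniform mask $\vec r_k\in\mathcal{M}^{Mm}$ which $A$ never sees, and that the $\mathrm{randomize}$ calls in Protocol~\ref{prot:iter_MPC} make the comparison bits $\boldsymbol\delta_k$ uniform from $A$'s standpoint. Consequently $S_A$ samples $\widetilde{\vec U'_{k+1}}$ uniformly in $\mathcal{M}^{Mm}$ and $\widetilde{\boldsymbol\delta}_k$ uniformly in $\{0,1\}^{Mm}$, feeds them into the DGK and $\mathrm{OT}'$ simulators, and at the terminal iteration invokes the standard $\mathrm{OT}$ simulator on the known output $\vec u(t)$. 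The blinding steps are perfectly hiding by the one-time-pad argument recalled in Section~\ref{subsec:SS}, and the two simulator invocations contribute only a negligible computational gap.

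The main obstacle is composing the per-iteration simulators into a simulator for the full online loop while keeping the $\mathrm{LabHE}$ secrets $\boldsymbol\rho_k$ pseudorandom throughout. The composition is handled by a hybrid argument over $k=0,\ldots,K-1$, one hybrid per call to DGK, $\mathrm{OT}'$, and the refresh step. The pseudorandomness of $\boldsymbol\rho_k$ relies on the pseudorandomness of $F$ in the $\mathrm{LabHE}$ construction, which in turn demands that no label be reused and that every mask $\vec r_k$ be drawn afresh; the label layout prescribed in Protocol~\ref{prot:prior_MPC} assigns distinct indices to every variable-time-coordinate triple, and the masks are sampled anew at each iteration, so both conditions hold. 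Finally, Assumption~\ref{assum:coalitions} is exactly what guarantees that for every ciphertext placed on the wire at least one honest party holds the corresponding decryption key, which is the hypothesis needed to invoke semantic security in each of the reductions above; this completes the proposed argument.
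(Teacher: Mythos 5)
Your proposal is correct and follows essentially the same route as the paper's proof: simulators are built per party and per admissible coalition, the cloud's view is reduced to semantic security of $\mathrm{LabHE}$/$\mathrm{AHE}$ plus composition with the DGK and OT simulators, the actuator's view is reduced to the statistical hiding of the one-time-pad blinds $\vec r_k$ and the randomized comparison order, and the setup/subsystem simulators are trivial. The only difference is presentational: you spell out the hybrid argument over the $K$ iterations, whereas the paper argues the leap from real view to simulator directly (deferring the sequential-games version to a cited reference), so no substantive gap remains.
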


\begin{proof}
The components of the protocol: $\mr{AHE}$, secret sharing, pseudorandom generator, $\mr{LabHE}$, oblivious transfer and the comparison protocol are individually secure, meaning either their output is computationally indistinguishable from a random output or they already satisfy Definition~\ref{def:coalition_view}. We are going to build the views of the allowed coalitions and their corresponding simulators and use the previous results to prove they are computationally indistinguishable.

The cloud has no inputs and no outputs, hence its view is composed solely from received messages and coins:
\begin{align*}
	V_{C}(\emptyset) = \Bigg(&\mr{mpk}, \E\left(-\frac{1}{L}\vec{H}\right),\E\left(-\frac{\eta}{L}\vec{H}\right),\E\left(\frac{1}{L}\vec{F}^\intercal\right),\E(\eta),[[\vec{h}_u]],[[\vec{l}_u]], \mf{R}_C,\E(\vec{U}'_0), \\
	&\Big\{\E(\vec{U}_k), [[\vec{a}_k]], [[\vec{b}_k]], \left[\left[U_{k+1}\right]\right], \mr{msg}_{\mr{Pr}.\ref{prot:encr_DGKV}}, \mr{msg}_{\mr{OT}} \Big\}_{k\in\{0,\ldots,K-1\}} \Bigg). \numberthis\label{eq:view_cloud}
\end{align*}

The actuator's input is the function $f_{\mr{MPC}}$ and the output is $\vec{u}(t)$. Then, its view is:
\begin{align}\label{eq:view_actuator}
\begin{split}
	V_{A}(f_{\mr{MPC}}) = \Bigg(f_{\mr{MPC}}, \mr{mpk}, \mr{msk}, \mr{upk}, \mf{R}_A,
	\Big\{\vec{U}'_{k+1}, \mr{msg}_{\mr{Pr}.\ref{prot:encr_DGKV}}, \mr{msg}_{\mr{OT}} \Big\}_{k\in\{0,\ldots,K-1\}},\vec{u}(t)  \Bigg),
\end{split}
\end{align}
which includes the keys $\mr{mpk}, \mr{msk}$ because their generation involve randomness.

The setup's inputs are the model and costs of the system and no output after the execution of Protocol~\ref{prot:iter_MPC}, since it is just a helper entity. Its view is:
\begin{align}\label{eq:view_setup}
\begin{split}
	V_{Set}(\vec{A}, \vec{B}, \vec{P}, \vec{Q}, \vec{R}) = \Bigg(\vec{A}, \vec{B}, \vec{P}, \vec{Q}, \vec{R}, \mr{mpk},\mr{usk}_{Set},\mf{R}_{Set} \Bigg).
\end{split}
\end{align}

Finally, for a subsystem~$i$, $i\in[M]$, the inputs are the local control action constraints and the measured states and there is no output obtained through computation after the execution of Protocol~\ref{prot:iter_MPC}. Its view is:
\begin{align}\label{eq:view_subsystem}
\begin{split}
	V_{S^i}(\mc U^i, \vec{x}^i(t)) = \Bigg(\mc U^i, \vec{x}^i(t), \mr{mpk},\mr{usk}_{S^i},\mf{R}_{S^i}\Bigg).
\end{split}
\end{align}

In general, the indistinguishability between the view of the adversary corrupting the real-world parties and the simulator is proved through sequential games in which some real components of the view are replaced by components that could be generated by the simulator, which are indistinguishable from each other. In our case, we can directly make the leap between the real view and the simulator by showing that the cloud only receives encrypted messages, and the actuator receives only messages blinded by one-time pads. In~\cite{Alexandru2018cloudQPHE}, the proof for the privacy of a quadratic optimization problem solved in the same architecture is given with sequential games.

For the cloud, consider a simulator $S_C$ that generates $\widetilde{\mr{mpk}},\widetilde{\mr{msk}}\leftarrow \Init(1^\sigma)$, generates $\widetilde{\mr{usk}}_j\leftarrow\KeyGen(\widetilde{\mr{mpk}})$, for $j\in\{S^i, Set, A\}$, $i\in[M]$ and then $(\widetilde{\boldsymbol\tau}, \widetilde{\boldsymbol{\mf{b}}}, \widetilde{[[\boldsymbol{\mf{b}}]]})_j$. We use $\widetilde{(\cdot)}$ also over the encryptions to show that the keys are different from the ones in the view. Subsequently, the simulator encrypts random values of appropriate sizes to obtain $\widetilde{\E\left(-\frac{1}{L}\vec{H}\right)},\widetilde{\E\left(-\frac{\eta}{L}\vec{H}\right)},\widetilde{\E\left(\frac{1}{L}\vec{F}^\intercal\right)},\widetilde{\E(\eta)},\widetilde{[[\vec{h}_u]]},\widetilde{[[\vec{l}_u]]}, \widetilde{\E(\vec{U}'_0})$. $S_C$ generates the coins $\widetilde{\mf R}_C$ as in line~4 in Protocol~\ref{prot:prior_MPC} and obtains $\E(\vec{U}_0)$ as in line~4 in Protocol~\ref{prot:iter_MPC}. Then, for each $k=\{0,\ldots,K-1\}$, it computes $\widetilde{[[\vec{t}_k]]}$ as in line~7 in Protocol~\ref{prot:iter_MPC} and shuffles $\widetilde{[[\vec{t}_k]]}$ and $\widetilde{[[\vec{h}_u]]}$ into $\widetilde{[[\vec{a}_k]]}, \widetilde{[[\vec{b}_k]]}$. $S_C$ then performs the same steps as the simulator for party A in Protocol~\ref{prot:encr_DGKV} and gets $\widetilde{\mr{msg}}_{\mr{Pr}.\ref{prot:encr_DGKV}}$. Furthermore, $S_C$ generates an encryption of random bits $\widetilde{\boldsymbol \delta_k}$ and of $\widetilde{\E(\vec{U}_k)}$ and performs the same steps as the simulator for party A as in the proof of Proposition~\ref{prop:OT} (or the simulator for the standard OT) and gets $\widetilde{\mr{msg}}_{\mr{OT}}$. It then outputs:
\begin{align*}
	S_{C}(\emptyset) =\Bigg(&\widetilde{\mr{mpk}}, \widetilde{\E\left(-\frac{1}{L}\vec{H}\right)},\widetilde{\E\left(-\frac{\eta}{L}\vec{H}\right)},\widetilde{\E\left(\frac{1}{L}\vec{F}^\intercal\right)},\widetilde{\E(\eta)},\widetilde{[[\vec{h}_u]]},\widetilde{[[\vec{l}_u]]}, \widetilde{\mf{R}}_C, \widetilde{\E(\vec{U}'_0}),\\
	&\Big\{\widetilde{\E(\vec{U}_k)}, \left(\widetilde{[[\vec{a}_k]]}, \widetilde{[[\vec{b}_k]]}\right), \widetilde{\left[\left[U_{k+1}\right]\right]}, \widetilde{\mr{msg}}_{\mr{Pr}.\ref{prot:encr_DGKV}}, \widetilde{\mr{msg}}_{\mr{OT}} \Big\}_{k\in\{0,\ldots,K-1\}} \Bigg). \numberthis\label{eq:sim_cloud}
\end{align*}

All the values in the view of the cloud in~\eqref{eq:view_cloud} -- with the exception of the random values $\mf{R}_C$ and the key $\mr{mpk}$, which are statistically indistinguishable from $\widetilde{\mf{R}}_C$ and $\widetilde{\mr{mpk}}$ because they are drawn from the same distributions -- are encrypted with semantically secure encryptions schemes (\AHE and $\mr{LabHE}$). This means they are computationally indistinguishable from the encryptions of random values in~\eqref{eq:sim_cloud}, even with different keys. This happens even when the values from different iterations are encryptions of correlated quantities. 
Thus, $V_C(\emptyset) \stackrel{c}{\equiv} S_C(\emptyset)$.

We now build a simulator $S_A$ for the actuator that takes as input $f_{\mr{MPC}},\vec{u}(t)$. $S_A$ will take the same steps as in lines~1, 3--7 in Protocol~\ref{prot:prior_MPC}, obtaining $\widetilde{\mr{mpk}}, \widetilde{\mr{msk}}, \widetilde{\mr{upk}}, \widetilde{\mr{usk}}, \widetilde{\boldsymbol{\tau}}_{A},$ $\widetilde{\boldsymbol{\mf b}}_{A}, \widetilde{[[\boldsymbol{\mf b}_{A}]]},\widetilde{\mf{R}}_A,$ $\widetilde{\vec{U}'_0}$ and instead of line 2, it generates $\widetilde{\mr{upk}}_j, \widetilde{\mr{usk}}_j\leftarrow\KeyGen(\widetilde{\mr{mpk}})$ itself, for $j\in\{S^i, Set\}$, $i\in[M]$. For $k=0,\ldots,K-1$, $S_A$ performs the same steps as the simulator for party B in Protocol~\ref{prot:encr_DGKV} and gets $\widetilde{\mr{msg}}_{\mr{Pr}.\ref{prot:encr_DGKV}}$. Furthermore, $S_A$ performs the same steps as the simulator for party B as in the proof of Proposition~\ref{prop:OT} (or the simulator for the standard OT) and gets $\widetilde{\mr{msg}}_{\mr{OT}}$. It then outputs:

\begin{align}\label{eq:sim_actuator}
\begin{split}
	S_{A}(f_{\mr{MPC}},\vec{u}(t)) = \Bigg( &f_{\mr{MPC}}, \widetilde{\mr{mpk}}, \widetilde{\mr{msk}}, \widetilde{\mr{upk}}, \widetilde{\mf{R}}_A, \\
	&\Big\{\widetilde{\vec{U}'_{k+1}}, \widetilde{\mr{msg}}_{\mr{Pr}.\ref{prot:encr_DGKV}}, \widetilde{\mr{msg}}_{\mr{OT}} \Big\}_{k\in\{0,\ldots,K-1\}}, \vec{u}(t) \Bigg).
\end{split}
\end{align}

All the values in the view of the actuator in~\eqref{eq:view_actuator} -- with the exception of the random values $\mf{R}_A$ and the keys $\mr{upk}$, which are statistically indistinguishable from $\widetilde{\mf{R}}_A$ and $\widetilde{\mr{upk}}$ because they are drawn from the same distributions and $\vec{u}(t)$ -- are blinded by random numbers, different at every iteration, which means that they are statistically indistinguishable from the random values in~\eqref{eq:sim_actuator}. This again holds even when the values that are blinded at different iterations are correlated and the actuator knows the solution $\vec{u}(t)$, because the values of interest are drowned in large noise. Thus, $V_{A}(f_{\mr{MPC}})\stackrel{c}{\equiv} S_{A}(f_{\mr{MPC}},\vec{u}(t))$.

The setup and subsystems do not receive any other messages apart from the master public key~\eqref{eq:view_setup},~\eqref{eq:view_subsystem}. Hence, a simulator $S_{Set}$ for the setup and a simulator $S_{S^i}$ for a subsystem~$i$ can simply generate $\widetilde{\mr{mpk}}\leftarrow \Init(1^\sigma)$ and then proceed with the execution of lines 2--5 in Protocol~\ref{prot:prior_MPC} and output their inputs, messages and coins. The outputs of the simulators are trivially indistinguishable from the views.

When an adversary corrupts a coalition, the view of the coalition contains the inputs of all parties, and a simulator takes the coalition's inputs and outputs. The view of the coalition between the cloud, the setup, and a number $l$ of subsystems is:
\begin{align*}
V_{CSl}\big(\vec{A}, \vec{B}, \vec{P}, \vec{Q}, \vec{R}, &\{\mc U^i, \vec{x}^i(t)\}_{i\in{i_1,\ldots,i_l}}\big)  = V_C(\emptyset)\cup V_{Set}(\vec{A}, \vec{B}, \vec{P}, \vec{Q}, \vec{R}) \cup \\
\cup V_{S^{i_1}} & (\mc U^{i_1}, \vec{x}^{i_1}(t)) \cup \ldots \cup V_{S^{i_l}}(\mc U^{i_l}, \vec{x}^{i_l}(t)).
\end{align*}

A simulator $S_{CSl}$ for this coalition takes in the inputs of the coalition and no output and performs almost the same steps as $S_C$, $S_{Set}$, $S_{S^i}$, without randomly generating the quantities that are known by the coalition. The same argument of having the messages drawn from the same distributions and encrypted with semantically secure encryption schemes proves the indistinguishability between $V_{CSl}(\cdot)$ and $S_{CSl}(\cdot)$.

The view of the coalition between the actuator, the setup, and a number $l$ of subsystems is the following:
\begin{align*}
V_{ASl}\big(f_{\mr{MPC}},\vec{u}(t),~&\vec{A}, \vec{B}, \vec{P}, \vec{Q}, \vec{R}, \{\mc U^i, \vec{x}^i(t)\}_{i\in{i_1,\ldots,i_l}}\big)  = V_A(f_{\mr{MPC}},\vec{u}(t))\cup \\
\cup V_{Set}( & \vec{A}, \vec{B}, \vec{P}, \vec{Q}, \vec{R})
\cup V_{S^{i_1}}  (\mc U^{i_1}, \vec{x}^{i_1}(t)) \cup \ldots \cup V_{S^{i_l}}(\mc U^{i_l}, \vec{x}^{i_l}(t)).
\end{align*}

A simulator $S_{ASl}$ for this coalition takes in the inputs of the coalition and $\vec{u}(t)$ and performs almost the same steps as $S_A$, $S_{Set}$, $S_{S^i}$, without randomly generating the quantities that are now known. The same argument of having the messages drawn from the same distributions and blinded with one-time pads proves the indistinguishability between $V_{ASl}(\cdot)$ and $S_{ASl}(\cdot)$. 

The proof is now complete.
\end{proof}

We can also have the private MPC scheme run for multiple time steps. Protocol~\ref{prot:prior_MPC} can be modified to also generate the labels and secrets for $T$ time steps. Protocol~\ref{prot:iter_MPC} can be run for multiple time steps, and warm starts can be included by adding two lines such that the cloud obtains $\E\left(\left\{\vec{U}^t_K\right\}_{m+1:M}\right)$ and sets $\E(\vec{U}^{t+1}_0) = \left[ \begin{matrix}{\vec{U}^{t}_K}^\intercal & \vec{0}_m^\intercal \end{matrix}\right]^\intercal$.

\subsection{Analysis of errors}\label{subsec:error_analysis}
As mentioned at the beginning of the section, the values that are encrypted, added to or multiplied with encrypted values have to be integers. We consider \mbox{fixed-point} representations with one sign bit, $l_i$ integer bits and $l_f$ fractional bits. We multiply the values by $2^{l_f}$ then truncate to obtain integers prior to encryption, and, after decryption, we divide the result by the appropriate quantity (e.g., we divide the result of a multiplication by $2^{2l_f}$). Furthermore, the operations can increase the number of bits of the result, hence, before the comparisons in Protocol~\ref{prot:iter_MPC} an extra step that performs interactive truncation has to be performed, because Protocol~\ref{prot:encr_DGKV} requires a fixed number of bits for the inputs. Also, notice that when the encryption is refreshed, i.e., a ciphertext is decrypted and re-encrypted, the accumulation of bits due to the operations is truncated back to the desired size.

Working with \mbox{fixed-point} representations can lead to overflow, quantization and arithmetic \mbox{round-off} errors. Thus, we want to compute the deviation between the fixed-point solution and optimal solution of the FGM algorithm in~\eqref{eq:FGM}. In order to bound it, we need to ensure that the number of fractional bits $l_f$ is sufficiently large such that the feasibility of the \mbox{fixed-point} precision solution is preserved, that the strong convexity of the \mbox{fixed-point} objective function still holds and that the \mbox{fixed-point} step size is such that the FGM converges. The errors can be written as states of a stable linear system with bounded disturbances. Bounds on the errors for the case of public model are derived in~\cite{Alexandru2018cloud} and similar bounds can be obtained for the private model. The bounds on this deviation can be used in an offline step to choose an appropriate \mbox{fixed-point} precision for the desired performance of the system. 

\section{Discussion}
\label{sec:discussion}
Secure multi-party computation protocols require many rounds of communication in general, and the amount of communication depends on the amount of data that needs to be concealed. 
This can also be observed in Protocol~\ref{prot:iter_MPC}. Therefore, in order to be able to use this proposed protocol, we need fast and reliable communication between the cloud and the actuator. 

In the architecture we considered, the subsystems are computationally and memory constrained devices, hence, they are only required to generate the encryptions for their measurements. The setup only holds constant data, and it only has to compute the matrices in~\eqref{eq:FGM} and encrypt them in the initialization step. Furthermore, we considered the existence of the setup entity for clarity in the exposition of the protocols, but the data held by the setup could be distributed to the other participants in the computation. In this case, the cloud would have to perform some extra steps in order to aggregate the encrypted system parameters (see~\cite{Alexandru2019encrypted} for a related solution). Notice that the subsystems and setup do not need to generate labels for the number of iterations, only the actuator does. The actuator is supposed to be a machine with enough memory to store the labels and reasonable amount of computation power such that the encryptions and decryptions are performed in a practical amount of time (that will be dependent on the sampling time of the system), but less powerful than the cloud. The cloud controller is assumed to be a server, which has enough computational power and memory to be capable to deal with the computations on the ciphertexts, which can be large, depending on the encryption schemes employed. 

If fast and reliable communication is not available or if the actuator is a highly constrained device, then a fully homomorphic encryption solution that is solely executed by the cloud might be more appropriate, although its execution can be substantially slower.

Compared to the two-server MPC with private signals but public model from~\cite{Alexandru2018cloud}, where only \AHE is required, the MPC with private signals and private model we considered in this chapter is only negligibly more expensive. Specifically, the ciphertexts are augmented with one secret that has the number of bits substantially smaller than the number of bits in an \AHE ciphertext, and each online iteration only incurs one extra round of communication, one decryption and one encryption. All the other computations regarding the secrets are done offline. This shows the efficiency and suitability of \LabHE for encrypted control applications.

\bibliographystyle{spmpsci.bst}
\bibliography{biblo}

\end{document}